\newtheorem{lemma}{Lemma}
\newtheorem{theorem}{Theorem}
\newtheorem{definition}{Definition}
\begin{document}

\title{Maximizing Acceptance Probability for Active Friending in On-Line
Social Networks}
\author{De-Nian Yang$^\dagger$,\ \ Hui-Ju Hung$^\dagger$, \ \ Wang-Chien Lee$%
^\star$,\ \ Wei Chen$^\mathsection$ \\
\affaddr{$^\dagger$Academia Sinica, Taipei, Taiwan} \\
\affaddr{$^\star$The Pennsylvania State University, State College,
Pennsylvania, USA} \\
\affaddr{$^\mathsection$Microsoft Research Asia, Beijing, China} \\
}
\maketitle

\begin{abstract}
Friending recommendation has successfully contributed to the explosive
growth of on-line social networks. Most friending recommendation services
today aim to support passive friending, where a user passively selects
friending targets from the recommended candidates. In this paper, we
advocate recommendation support for active friending, where a user actively
specifies a friending target. To the best of our knowledge, a recommendation
designed to provide guidance for a user to systematically approach his
friending target, has not been explored in existing on-line social
networking services. To maximize the probability that the friending target
would accept an invitation from the user, we formulate a new optimization
problem, namely, \emph{Acceptance Probability Maximization (APM)}, and
develop a polynomial time algorithm, called \emph{Selective Invitation with
Tree and In-Node Aggregation (SITINA)}, to find the optimal solution. We
implement an active friending service with SITINA in Facebook to validate
our idea. Our user study and experimental results manifest that SITINA
outperforms manual selection and the baseline approach in solution quality
efficiently.
\end{abstract}

\keywords{Friending, social network, social influence}

\numberofauthors{1} \numberofauthors{1}

\section{Introduction}

\label{sec:introduction}

Due to the development and popularity of social networking services, such as
Facebook, Google+, and LinkedIn, the new notion of \textquotedblleft social
network friending\textquotedblright\ has appeared in recent years. To boost
the growth of their user bases, existing social networking services usually
provide friending recommendations to their users, encouraging them to send
invitations to make more friends.
Conventionally, friending recommendations are made following a \textit{%
passive friending} strategy, i.e., a user passively selects candidates from
the provided recommendation list to send the invitations.
Moreover, the recommended candidates are usually friends-of-friends of the
user, especially those who share many common friends with the user.
This strategy is quite intuitive because friends-of-friends may have been
acquaintances or friends offline. Furthermore, most users may feel more
comfortable to send a friending invitation to friends-of-friends rather than
a total stranger who they have shared no social connections with at all. It
is envisaged that the success rate of such a passive friending strategy is
high, contributing to the explosive growth of on-line social networking
services.

In contrast to the passive friending, the idea of \textit{active friending},
where a person may take proactive actions to make friend with another
person, does exist in our everyday life.
For example, in a high school, a student fan may like to make friend with
the captain in the school soccer team or with the lead singer in a
rock-and-roll band of the school. A salesperson may be interested in getting
acquainted with a high-value potential customer in hope of making a business
pitch. A young KDD researcher may desire to make friend with the leaders of
the community to participate in conference organizations and services.
However, to the best of our knowledge, the idea of providing friending
recommendations to assist and guide a user to effectively approaching
another person for active friending has not been explored in existing
on-line social networking services. We argue that social networking service
providers, interested in exploring new revenues and further growth of their
user bases, may be interested in supporting active friending.

One may argue that, in existing social networking services, an \emph{active
friending initiator} can send an invitation directly to the \emph{friending
target} anyway.\footnote{%
For the rest of the paper, we refer to the friending initiator and friending
target as \emph{initiator} and \emph{target} for short.} However, it may not
work if the initiator is regarded as a stranger by the target, especially
when they are socially distant, i.e., they have no common friends.
Therefore, to increase the chance that the target would accept the friending
invitation, it may be a good idea for the initiator to first know some
friends of the target, which in turn may require the initiator to know some
friends of friends of the target. In other words, if the initiator would
like to plan for some actions, he may need the topological information of
the social network between the target and himself, which unfortunately is
not available due to privacy concerns.
Therefore, it would be very nice if the social networking service providers,
given a target specified by the initiator, could provide a step-by-step
guidance in form of recommendations to assist the initiator to make friends
towards the target.

In this paper, we are making a grand suggestion for the social networking
service providers to support active friending. Our sketch is as follows. By
iteratively recommending a list of candidates who are friends of at least
one existing friend of the initiator, a social networking service provider
may support active friending,
without violating the current practice of privacy preservation in
recommendations.
Consider an initiator who specifies a friending target.
The social networking service, based on its proprietary algorithms,
recommends a set of friending candidates who may likely increase the chance
for the target to accept the eventual invitation from the initiator. Similar
to the recommendations for passive friending, the recommendation list
consists of only the friends of existing friends of the initiator.
Supposedly, the initiator follows the recommendations to send invitations to
candidates in the list. The invitation is displayed to a candidate along
with the list of common friends between the initiator and the candidate so
as to encourage acceptance of the invitation.\footnote{%
This is also a common practice for passive friending in existing social
networking services such as Facebook, Google+, and LinkedIn.}
As such, the aforementioned step is repeated until the friending target
appears in the recommendation list and an invitation is sent by the
initiator.
Obviously, the recommendations made for passive friending may not work well
because active friending is target-oriented. The recommended candidates
should be carefully chosen for the initiator, guiding him to approach the
friending target step-by-step.

To support active friending, the key issue is on the design of the
algorithms that select the recommendation candidates. A simple scheme is to
provide recommendations by unveiling the shortest path between the initiator
and the target in the social network, i.e., recommending one candidate at
each step along the path. As such, the initiator can gradually approach the
target by acquainting the individuals on the path. However, this
shortest-path recommendation approach may fail as soon as a middle-person
does not accept the friending invitation (since only one candidate is
included in the recommendation list for each step). To address this issue,
it is desirable to recommend multiple candidates at each step since the
initiator is more likely to share more common friends with the target and
thereby more likely to get accepted by the target. Especially, by \textit{%
broadcasting} the friending invitations to all neighbors of the initiator's
friends, the probability to reach the friending target and get accepted can
be effectively maximized as enormous number of paths are flooded with
invitations to approach the target. Nevertheless, friending invitations are
abused here because the above undirectional broadcast is aimless and prone
to involve many unnecessary neighbors. Moreover, the initiator may not want
to handle a large number of tedious invitations.

In this paper, we study a new optimization problem, called \textit{%
Acceptance Probability Maximization (APM)}, for active friending in on-line
social networks. The service providers, who eager to explore new monetary
tools for revenue increase, may consider to charge the users from active
friending service.\footnote{%
Recent news reported that Facebook now allows its user to pay to promote
their and their friends' posts \cite{CNN}.} Given an initiator $s$, a
friending target $t$, and the maximal number $r_{R}$ of invitations allowed
to be issued by the initiator, APM\ finds a set $R$ of $r_{R}$ nodes, such
that $s$ can sequentially send invitations to the nodes in $R$ in order to
approach $t$. The objective is to maximize the acceptance probability at $t$
of the friending invitation when $s$ send it to $t$. The parameter $r_{R}$
controls the trade-off between the expected acceptance probability of $t$
and the anticipated efforts made by $s$ for active friending $t$.\footnote{%
Since $s$\ is not aware of the network topology and the distance to $t$, it
is not reasonable to let $s$\ directly specify $r_{R}$. Instead, it is more
promising for the service provider to list a set of $r_{R}$\ and the
corresponding acceptance probabilities and monetary costs, so that the user
can choose a proper $r_{R}$\ according to her available budget.}
Again, $R$ is not returned to $s$ as a whole due to privacy concerns.
Instead, only a subset $R_{s}$ of nodes that are adjacent to the existing
friends of $s$ are recommended to $s$, while other subsets of $R$ will be
recommended to $s$ as appropriate in later steps\footnote{%
In this paper, APM is formulated as an offline optimization problem aiming
to maximize the acceptance probability in expectation. In an on-line
scenario where the initiator does not send invitations to some nodes in $%
R_{s}$ or some nodes in $R_{s}$ do not accept the invitations, a new APM
with renewed invitation budget could be re-issued to obtain adapted
recommendations. While this scenario raises important issues, it is beyond
the scope of this paper.}. 

To tackle the APM problem, we propose three algorithms: i) Range-based
Greedy (RG) algorithm, ii) Selective Invitation with Tree Aggregation (SITA)
algorithm, and iii) Selective Invitation with Tree and In-Node Aggregation
(SITINA) algorithm.
RG selects candidates by taking into account their acceptance probability
and the remaining budget of invitations, leading to the best recommendations
for each step. However, the algorithm does not achieve the optimal
acceptance probability of the invitation to a target due to the lack of
coordinated friending efforts. On the other hand, aiming to systematically
select the nodes for recommendation, 
SITA is designed by dynamic programming to find nodes which may result in a
coordinated friending effort to increase the acceptance probability of the
target. SITA is able to obtain the optimal solution, yet has an exponential
time complexity.
To address the efficiency issue, SITINA further refines the ideas in SITA by
carefully aggregating some information gathered during processing to
alleviate redundant computation in future steps and
thus obtains the optimal solution for APM in polynomial time. The
contributions of this paper are summarized as follows.


\begin{itemize}
\item We advocate for the idea of \emph{active friending} in on-line social
networks and propose to support active friending through a series of
recommendation lists which serve as a step-to-step guidance for the
initiator. 

\item We formulate a new optimization problem, namely, Acceptance
Probability Maximization (APM), for configuring the recommendation lists in
the active friending process. APM aims to maximize the acceptance
probability of the invitation from the initiator to the friending target, by
recommending selective intermediate friends to approach the target. 

\item We propose a number of new algorithms for APM. Among them, Selective
Invitation with Tree and In-Node Aggregation (SITINA) derives the optimal
solution for APM with $O(n_{V}{r_{R}}^{2})$ time, where $n_{V}$ is the
number of nodes in a social network, and ${r_{R}}$ is the number of
invitations budgeted for APM.

\item We implement SITINA in Facebook in support of active friending and
conduct a user study including 169 volunteers with varied background. The
user study and experimental results manifest that SITINA outperforms manual
selection and the baseline approach in solution quality efficiently.
\end{itemize}

The rest of this paper is organized as follows. Section \ref%
{sec:problem_formulation} introduces a model for invitation acceptance and
formulates APM. Section \ref{sec:related_work} reviews the related work.
Section \ref{sec:algorithm} presents the SITA\ and SITINA algorithms
proposed for APM. Section \ref{sec:experiments} reports our user study and
experimental results. Finally, Section~\ref{sec:conclusion} concludes the
paper.


\section{Invitation Acceptance}

\label{sec:problem_formulation}

The notion of \emph{acceptance probability} is with respect to an
invitation. Thus, here we first discuss two important factors that may
affect the acceptance probability of a friending invitation in the
environment of on-line social networking services and describe how in this
work we determine whether an individual would accept a received invitation.
Next, we explain why the issue of deriving the acceptance probability over a
social network is very challenging and how we address this issue by adopting
an approximate probability based on a maximum influence in-arborescence
(MIIA) tree. We formulate the acceptance probability maximization (APM)
problem based on the MIIA tree. The invitation acceptance model follows the
existing social influence and homophily models, which have been justified in
the literature. Later in Section \ref{sec:experiments}, the invitation
acceptance model will be validated by a user study with 169 volunteers.

\subsection{Factors for Invitation Acceptance}

In the process of active friending, while friending candidates are
recommended for the initiator to send invitations, whether the invitees will
accept the invitations remains uncertain. Based on prior research in
sociology and on-line social networks~\cite{Fond10WWW, Marsden93SMR,
McPherson01ARS}, we argue that when a person receives an invitation over an
on-line social network, the decision of the invitee depends primarily on two
important factors: i) the \textit{social influence factor }\cite%
{Goyal10WSDM,Goyal11VLDB},
and ii) the \textit{homophily factor }\cite%
{Cialdini04ARP,Fond10WWW,McPherson01ARS}.
Here, the social influence factor represents the influence from the
surroundings (i.e., common friends) of individuals in the social network on
the decision. On the other hand, the homophily factor captures the fact that
each individual in a social network has a distinctive set of personal
characteristics, and the similarities and compatibilities among
characteristics of two individuals can strongly influence whether they will
become friends~\cite{Fond10WWW}. Between them, social influence comes from
established social links, while the homophily between two individuals may
exist without a prerequisite of established social relationship. Thus, we
consider these two factors separately but aim to treat them in a uniformed
fashion in our derivation of the acceptance probability for an invitation.

As the social influence factor involves the structure of social network
(i.e., the common friends of the individuals), we first consider the
acceptance probability of an invitation in terms of social influence.%
\footnote{%
We intend to extend it with homophily factor later.} Let the social network
be represented as a social graph $G(V,E)$ where $V$ consists of all the
users in the social networking system and $E$ be the established social
links among the users. An edge weight $w_{u,v}\in \lbrack 0,1]$ on the
directed edge $(u,v)\in E$ probabilistically denotes the social influence of
$u$ upon $v$. The probability can be derived according to the existing
method \cite{Goyal10WSDM,Goyal11VLDB} according to the interaction in
on-line social networks, while the setting of negative social influence has
also been introduced in \cite{Chen11SDM}. Thus, if $u$ is associated with an
invitation from a user $s$ to $v$ (i.e., $u$ is a common friend of $s$ and $%
v $), $w_{u,v}$ is the probability for $v$ to be socially influenced by $u$
to accept the invitation.\footnote{%
The social influence probability has been extensively used
to quantify the probability of success in the process of conformity,
assimilation, and persuasion in Social Psychology~\cite{Cialdini04ARP,
Fond10WWW, Marsden93SMR}. While how to obtain the edge weight is an active
research topic~\cite{Goyal10WSDM,Saito08KES}, it is out of scope of this
paper.} Hence, the acceptance probability for an invitation can be derived
by taking into account the social influences of all the existing common
friends associated with an invitation. It is assumed that each common friend
$u$ has an independent social influence on the invitee $v$ to accept the
friending invitation \cite{Cialdini04ARP, Fond10WWW,Marsden93SMR}
and thus the overall acceptance probability can be obtained by aggregating
the individual social influences. Later, user study in Section \ref%
{sec:experiments} demonstrates that the influence probability and homophily
probability derived according to the literature are consistent to the real
probabilities measured from the users.


While obtaining the acceptance probability for a given invitation (as
described above) is simple, deriving the acceptance probability for a
friending target $t$ who does not have any common friend with the initiator $%
s$ becomes very challenging because more than one invitations need to be
issued (so as to make some common friends first), and there are complicated
correlations among user acceptance events for users between $s$ and $t$.

Moreover, our ultimate task is to find a set $R$ of intermediate users
between $s$ and $t$ with size at most $r_{R}$ for $s$ to send invitations
to, so as to maximize the acceptance probability of $t$. We call this
problem the acceptance probability maximization (APM) problem. Due to the
combinatorial nature of this invitation set $R$, it is still hard to find
such a set to maximize the acceptance probability of $t$ even in cases where
computing the acceptance probability is easy. The following theorem makes the
above two hardness precise.
\begin{theorem}
\label{thm:sharp_P}Given the set of neighbors S of the initiator, computing
the acceptance probability of $t$ is \#P-hard. Moreover, finding a set $R$
with size $r_{R}$ that maximizes the acceptance probability of target $t$ is
NP-hard, even for cases when computing acceptance probability is easy.
\end{theorem}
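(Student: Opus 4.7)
The plan is to prove the two hardness claims by separate reductions from classical hard problems, each tailored to a different aspect of the APM model.

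For the first claim, the natural route is to reduce from the \emph{s-t network reliability} problem, which Valiant showed to be \#P-complete, or equivalently from computation of influence spread in the Independent Cascade model. Given a reliability instance with a directed graph and independent edge-retention probabilities, I would interpret each edge retention probability as a social-influence weight $w_{u,v}$, let $S$ be exactly the out-neighborhood of the source, and let $R$ be the remainder of the non-target vertices so that every potential common friend is eligible to influence. Under the assumption stated in the paper that each accepted common friend independently persuades a recipient with probability $w_{u,v}$ and acceptance cascades forward, the probability that $t$ eventually accepts coincides with the probability that the retained random subgraph contains an $s$-to-$t$ path. Hence computing the acceptance probability is at least as hard as \#s-t-reliability, so it is \#P-hard. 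This mirrors the Chen--Wang--Wang hardness argument for the IC model and should be essentially a definitional unfolding once the equivalence is written down.

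For the second claim, I would reduce from Maximum Coverage (equivalently Set Cover). Given a universe $\{u_1,\ldots,u_n\}$ and sets $S_1,\ldots,S_m$, construct a four-layer graph: a source $s$; set-nodes $x_1,\ldots,x_m$, each adjacent to $s$ via an edge of weight $1$; element-nodes $y_1,\ldots,y_n$, where $x_i \to y_j$ has weight $1$ iff $u_j \in S_i$; and the target $t$, with an edge from each $y_j$ to $t$ of weight $\varepsilon$ for a small constant $\varepsilon \in (0,1)$. With budget $r_R$, any optimal $R$ must use $r_R$ set-nodes (which deterministically activate all covered element-nodes), after which the acceptance probability of $t$ is exactly $1-(1-\varepsilon)^{c}$, where $c$ is the number of covered elements. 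Since this quantity is strictly increasing in $c$, maximizing the acceptance probability is equivalent to Maximum Coverage with $r_R$ sets, which is NP-hard. Moreover, because the constructed graph is layered with determinism on all hops except the last and with independent weight-$\varepsilon$ edges into $t$, the acceptance probability of $t$ for any fixed $R$ admits a closed-form computation in polynomial time, showing that the intractability is genuinely in the selection of $R$.

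The main obstacle is the second reduction: APM constrains the initiator to invite only nodes adjacent to someone who has already accepted, so the gadget must respect this sequential invitation order while still encoding a purely combinatorial coverage objective, and, simultaneously, the acceptance-probability computation for any fixed $R$ must be polynomial-time so that the qualifier \emph{``even for cases when computing acceptance probability is easy''} is honest. The layered construction with unit-weight deterministic layers and a single stochastic last hop is designed precisely to reconcile these two demands. The choice of $\varepsilon$ is inconsequential as long as $0 < \varepsilon < 1$, since strict monotonicity of $1-(1-\varepsilon)^{c}$ in $c$ is all that is used. The first reduction, by contrast, is largely a bookkeeping exercise once one observes that the cascaded acceptance process over a DAG is exactly a reliability experiment.
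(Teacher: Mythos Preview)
Your \#P-hardness argument is essentially the paper's: both reduce from $s$--$t$ reliability/connectedness by identifying the acceptance event with the existence of an $s$--$t$ path in the random retained subgraph. That part is fine.

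For the NP-hardness, your layered coverage gadget is also the paper's construction, but your budget accounting has a real gap. In APM the set $R$ is the \emph{entire} set of nodes that ever receive an invitation from $s$; any node outside $R\cup S$ has acceptance probability $0$ and cannot relay influence. Your claim that ``any optimal $R$ must use $r_R$ set-nodes'' therefore yields acceptance probability $0$ at $t$, since neither the element-nodes nor $t$ itself are in $R$. You seem to have slipped into the seed-selection model, where the cascade propagates freely beyond the chosen seeds; in APM it does not.

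The paper fixes this by reducing from Set Cover rather than Maximum Coverage: it sets $r_R = k + |Y| + 1$, large enough to hold $t$, all element-nodes, and $k$ set-nodes, and then asks whether the optimum reaches the ceiling $1-(1-p)^{|Y|}$, which happens iff the $k$ chosen set-nodes cover all of $Y$. Your Maximum-Coverage version is not straightforwardly repaired by the same padding, because once the budget is shared between the set layer and the element layer the optimal split is no longer obvious (an uncovered element-node in $R$ contributes nothing, so the trade-off between adding a set-node versus an element-node depends on the instance). The Set-Cover phrasing sidesteps that bookkeeping entirely.
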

\begin{proof}

We first prove that computing the acceptance probability of $t$ with given $R$ is \#P-hard. Let $G_{R}$ denote the induced subgraph of $G$ with $s$, $t$, and $R$. Let $\overline{G}_{R}$ denote a directed subgraph of $G_{R}$ by removing every edge $(u,v)$ in $G_{R}$ if $u$ does not influence $v$ to accept an invitation, either because $u$ does not become a friend of $s$ or $v$ does not accept the invitation due to the social influence from $u$. Therefore, $t$ will finally be a friend of $s$ if there exists a path in $\overline{G}_{R}$, representing that every node in the path, including $t$, accepts the friend invitation from $s$. Apparently, if the probability of social influence associated with each edge is $0.5$, the probability that $t$ accepts the friend invitation is the number of subgraph $\overline{G}_{R}$ with $t$ accepting, divided by the number of possible subgraph $\overline{G}_{R}$, which is $2^{n_{E}}$, where $n_{E}$ is the number of directed edges. In other words, after acquiring the accepting probability, the number of subgraph $\overline{G}_{R}$ with $t$ accepting can be computed immediately by multiplying $2^{n_{E}}$.

We prove that computing the acceptance probability of $t$ is \#P-hard with the reduction from a \#P-complete problem, called $s$-$t$ \textit{connectedness problem} \cite{Valiant79SIAM}, that finds the number of subgraphs in a directed graph $G_{C}$ with a directed path from $s$ to $t$. We let $G_{R}=G_{C}$ and assign the probability of social influence with each edge as $0.5$. With the observation in the previous paragraph, if finding the accepting probability of $t$ is not \#P-complete, $s$-$t$ \textit{connectedness problem} is not \#P-complete because the number of subgraphs in $G_{C}$ with a directed path from $s$ to $t$ is simply the accepting probability of $t$ multiplied by $2^{n_{E}}$.

Moveover, even for cases when computing the acceptance probability of $t$ is easy, finding a set $R$ that maximize the acceptance probability of target $t$ is NP-hard in IC model. We prove it with a reduction from the \textit{set cover problem}. For a bipartite graph $(X, Y, E)$, set cover problem aims to identify whether there exists a $k$-node subset $X_S$ of $X$ covering all nodes in $Y$, i.e., for any $y \in Y$, there exists an $x \in X_S$ with $(x,y) \in E$. Let us denote $|Y|=z_y$. For an instance of set cover problem, we build an instance for computing the acceptance probability of $t$ as follows, and an illustration figure is shown in Figure~\ref{fig:np_reduction}. 1) We add a node $s$ and a directed edge $(s,x)$ for each $x \in X$ with weight $w(s,x)=1$. Notice that $s$ is the only one node with acceptance probability $1$ in the beginning. 2) We add a node $t$ and a directed edge $(y,t)$ for each $y \in Y$ with weight $w(y,t)=p$, $p \in (0,1)$. 3) We set the $w(e)=1$ for each $e \in E$. We prove that there is a $k$-node subset $X_S \subseteq X$ covering all nodes in $Y$ in the set cover problem if and only if there is a solution with acceptance probability $1-(1-p)^{z_y}$ when selecting $r_R=k+z_y+1$ nodes in computing the acceptance probability.\footnote{Notice that $1-(1-p)^{z_y}$ is the maximum probability when including all nodes in $X \cup Y \cup {t}$ into $R$, thus it is obvious the maximum probability when selecting $r_R$ nodes.} We first prove the sufficient condition. If there exists a $k$-node subset $X_S$ covering all nodes in $Y$, selecting $X_S \cup Y \cup \{t\}$ (totally $k+z_y+1$ nodes) will obtain acceptance probability $1-(1-p)^{z_y}$. We then prove the necessary condition. If there is a solution $R$ with $r_R$ nodes obtaining acceptance probability $1-(1-p)^{z_y}$, $R$ must contain $t$ and all nodes in $Y$, and the set $R \cap X$ (totally $r_R-z_y-1 = k$ nodes) must cover all nodes in $Y$. Thus selecting a suitable $R$ is NP-hard. The theorem follows.
\end{proof}

\begin{figure}[t]
\center
\includegraphics[width = 1.8 in]{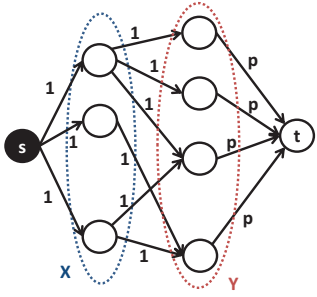}
\caption{An illustration graph of building the APM instance}
\label{fig:np_reduction}
\vspace{-9pt}
\end{figure}

\subsection{Approximate Acceptance Probability}

The spread maximization problem in Independent Cascade (IC) model \cite%
{Kempe03KDD} also faces the challenge in Theorem \ref{thm:sharp_P}.\footnote{%
The spread maximization problem, which also adopts a probabilistic influence
model, is different from APM in this paper. Given an initiator $s$ and his
friends, APM intends to discover an effective subgraph (i.e., $R$) between
the seeds and $t$. On the other hand, the spread maximization problem, given
the topology of the whole social network, aims to find a given number of
seeds to maximize the size of the whole spread $t$.} To efficiently address
this issue, an approximate IC model, called MIA \cite{Chen11SDM, Chen12AAAI,
Chen10KDD}, has been proposed. The social influence from a person $u$ to
another person $v$ is effectively approximated by their \textit{maximum
influence path} (MIP), where the social influence $w_{u,v}$ on the path $%
(u,v)$ is the maximum weight among all the possible paths from $u$ to $v$.
MIA creates a maximum influence in-arborescence, i.e., a directed tree, $%
MIIA(t,\theta )$ including the union of every MIP to $t$ with the
probability of social influence at least $\theta $ from a set $S$ of leaf
nodes. The MIA model has been widely adopted to describe the social
influence in the literature \cite{Chen11SDM, Chen12AAAI, Chen10KDD} with the
following definition on activation probability, which basically is the same
as the acceptance probability if $s$ broadcasts friending invitations to all
nodes in $MIIA(t,\theta )$. 

\begin{definition}
The activation probability of a node $v$ in $MIIA(t,\theta )$ is $ap^{\prime
}(v,S,MIIA(t,\theta )))=\newline
\left\{
\begin{matrix}
1\text{, if }v\in S \\
0\text{, if }N^{in}(v)=\emptyset \\
1-\prod_{u\in N^{in}(v)}{(1-ap^{\prime }(u,S,MIIA(t,\theta ))\cdot w_{u,v})}%
\text{, otherwise}%
\end{matrix}%
\right. $\newline
\noindent where $N^{in}(v)$ is the set of in-neighbors of $v$.
\end{definition}

Note that ${ap^{\prime }(u,S,MIIA(t,\theta ))\cdot w_{u,v}}$ is the joint
probability that $u$ is activated and successfully influences $v$, and $u$
can never influences $v$ if it is not activated. Therefore, the activation
probability of a node $v$ can be derived according to the activation
probability of all its in-neighbors, i.e., child nodes in the tree. Since $S$
is the set the leaf nodes, the activation probabilities of all nodes in ${%
MIIA(t,\theta )}$ can be derived in a bottom-up manner from $S$ toward $t$
efficiently.

In light of the similarity between the IC model and the decision model for
invitation acceptance in active friending with no budget limitation of
invitations, we also exploit MIA to tackle the APM problem. $MIIA(t,\theta )$
is constructed by the MIPs from all friends of $s$ to $t$, i.e., $S$ is the
set of friends of $s$. In other words, $\theta $ is set as $0$ to ensure
that the social influence from every friend is fully incorporated.
Nevertheless, different from the activation probability in the literature,
which allows the influence to propagate via every node in $MIIA(t,\theta )$,
the \textit{acceptance probability} for active friending allows the social
influence to take effect on invitation acceptance only via a set $R$ of
nodes to be selected in our problem. Thus, we define the acceptance
probability for an invitation to node $v$ as follows. 

\begin{definition}
The acceptance probability for an invitation of a node $v$ in $MIIA(t,\theta
)$ is $ap(v,S,R,MIIA(t,\theta )))=\newline
\left\{
\begin{matrix}
1\text{, if }v\in S \\
0\text{, if }v\notin R\text{ or }N_{v}^{in}=\emptyset \\
1-\prod_{u\in N^{in}(v),u\in R}{(1-ap(u,S,R,MIIA(t,\theta ))\cdot w_{u,v})}
\\
\text{, otherwise}%
\end{matrix}%
\right. $ \newline
\noindent where $N^{in}(s)$ is the set of in-neighbors of $s$. \label%
{def:acceptance_probability}
\end{definition}


Equipped with MIA, we are able to derive the acceptance probability of $t$\
efficiently with a simple iterative approach from the leaf nodes to the root
(i.e., $t$). The above MIA arborescence incorporates only the social
influence factor. 
As discussed earlier, the homophily factor between the initiator and the
receiver of an invitation is also crucial for friending. Homophily in \cite%
{Cialdini04ARP,Fond10WWW,McPherson01ARS} represents the probability for two
individuals $u$ and $v$ to create a new social link due to shared common
personal characteristics. Homophily in Sociology manifests the general
tendency of people to associate with others and similar others can be
quantified with varied approaches \cite%
{ChaojiNnodeLink12,HangalNodeLink10,XuNodeLink11}. The homophily probability
can be set according to \cite{Chen09CHI}.

To extend MIA, we attach a duplicated $s$ to each node with a directed edge,
with a parameter specifying the homophily factor from $s$ to $v$.
The MIP from each candidate to $t$, together with the directed edge from $s$
to the candidate, is incorporated in the extended MIA. Therefore, the
extended MIA is also an arborescence, where each leaf node is a friend of $s$
or $s$ herself, and those leaf nodes make up the set $S$.

\begin{figure}[t]
\includegraphics[width=3.2
in]{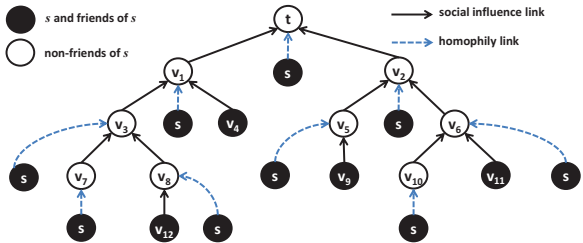} \vspace{-15pt}
\caption{Combining the social influence and homophily factors}
\label{fig:bothdimension}
\end{figure}

Figure \ref{fig:bothdimension} shows an example of the extended MIA. For
each internal node, such as $v_{1}$, its acceptance probability factors is
not only the social influence from $v_{3}$ and $v_{4}$ but also the
homophily factor between $s$ and $v_{1}$. 

In this paper, the influence probability and homophily probability are
derived according to the above literature without associating them with
different weights. Later, user study will be presented in Section \ref%
{sec:experiments}, and the results show that the real acceptance probability
complies with the acceptance probability of the above model.

\subsection{Problem Formulation}

In this work, we formulate an optimization problem, called \textit{%
Acceptance Probability Maximization (APM)}, to select a given number of
intermediate people to systematically approach the friending target $t$
based on $MIIA(t,\theta )$. The APM problem is formally defined as follows.

\noindent \textbf{Acceptance Probability Maximization (APM). } Given a
social network $G(V,E)$, an initiator $s$ and a friending target $t$, select
a set $R$ of $r_{R}$ users for $s$ to send friending invitations such that
the acceptance probability \\ $ap(t,S,R,MIIA(v,\theta ))$ is maximized, where $%
S $ is the friends of $s$, including $s$ itself.

As analyzed later, the optimal solution to APM can be obtained in $O(n_{V}{%
r_{R}}^{2})$ time\footnote{%
MIA was proposed to simplify IC model, which is computation intensive and
not scalable. Nevertheless, we prove that APM in IC\ model is NP-hard in Theorem~\ref{thm:sharp_P} and not submodular by displaying a counter example in Appendix.}, where $n_{V}$ is
the number of nodes in a social network, and ${r_{R}}$ is the total number
of invitations allowed. The setting of $r_{R}$ has been
discussed in Section \ref{sec:introduction}. It is worth noting that APM
maximizes the acceptance probability of $t$, instead of minimizing the
number of iterations to approach $t$, which can be achieved by the shortest
path routing in an on-line social network. Nevertheless, it is possible to
extend APM by limiting the number of edges in an MIP of $MIIA(t,\theta )$,
to avoid incurring an unaccepted number of iterations in active friending.

\section{Related Work}

\label{sec:related_work}

Recommendation for passive friending has been explored in the past few
years. Chen et al.~\cite{Chen09CHI} manifest that friending recommendations
based on the topology of an on-line social network are the easiest way
leading to the acceptance of an invitation. In contrast, recommendations
based on contents posted by users are very powerful for discovering
potential new friends with similar interests \cite{Chen09CHI}. Meanwhile,
research shows that preference extracted from social networking applications
can be exploited for recommendations \cite{Guy09RecSys}. To avoid
recommending socially distant candidates, users are allowed to specify
different social constraints \cite{Ronen09EDBT}, e.g., the distance between
a user and the recommended friending targets, to limit the scope of
friending recommendation. Moreover, community information has been explored
for recommendation \cite{Spertus05KDD}. Notice that the aforementioned
research work and ideas are proposed for passive friending, where the
friending targets are determined by the recommendation engines of social
networking service providers in accordance with various criteria (e.g.,
preference and social closeness, etc). Thus, the user can conveniently (but
\emph{passively}) send an invitation to targets on the recommendation list.
Complementary to the conventional passive friending paradigm, in this paper,
we propose the notion of active friending where a friending target can be
specified by the initiator. Accordingly, the recommendation service may
assist and guide the initiator to \emph{actively} approach a target.

The impact of social influence has been demonstrated in various
applications, such as viral marketing \cite%
{Chen10KDD,Kempe03KDD,Leskovec07KDD} and interest inference \cite{Wen10KDD}. Given
an on-line social network, a major research problem is the seed selection
problem, where the seeds correspond to the leaf nodes of MIA (i.e.,
initiator $s$ and her friends) in our problem. In contrast, APM is to select the
topology between the friends and $t$, instead of selecting the seeds. The
homophily factor, capturing the tendency of users to connect with similar
ones, has been considered in several applications, such that identifying
trusted users \cite{Tang13WSDM} and users relationships \cite{Yang12SIGIR}
in social networks.

Notice that some works develop algorithms to return a subgraph or path, such
as community detection \cite{Macropol10VLDB}, shortest path \cite%
{Cohen02SODA}, pattern matching \cite{Fan10VLDB}, or graph isomorphism query
\cite{Cheng07SIGMOD}. In contrast to the shortest path query, our algorithms
for the APM problem make emphasize on returning a graph, instead of a path. The
topology of the returned graph contains valuable neighborhood information of
some common friends who can be leveraged to effectively increase the
acceptance probability of a friending invitation. The initiator of a pattern
matching or a graph isomorphism query needs to specify a subgraph as the
query input. In contrast, this paper aims at finding an unknown graph
between $s$ and $t$ to maximize the acceptance probability of a invitation
to a friending target. 

\section{Algorithm Design}

\label{sec:algorithm}

To tackle the APM problem, we aim to design efficient algorithms in support
of the invitation recommendations for active friending. From our earlier discussions
, it is easy to observe that the set of intermediate nodes in $R$,
i.e., those to be recommended for invitation, play a crucial role in
maximizing the acceptance probability for active friending. Here we first
introduce a range-based greedy algorithm which provides some good
insights for our other algorithms.

The algorithm, given an invitation budget $r_{R}$, aims to find the set of
invitation candidates for recommendations to an initiator $s$ who would like
to make friends with a target $t$. 
Let $R$ denote the answer set, which is initialized as empty at the
beginning. The algorithm iteratively selects a node $v$ from the neighbors of $s$'s
current friends and adds it to $R$ based on two heuristics: 1) the
highest acceptance probability and 2) the number of remaining invitations.
The former aims to minimize the potential waste of a friending invitation,
while the latter avoids selecting a node too far away to reach $t$ by
constraining that $v$ can only be at most $r_{R}-\left\vert R\right\vert -1$
hops away from $t$. As a result, the range-based greedy algorithm is
inclined to first expand the friend territory of $s$ and then approach
towards the neighborhood of $t$ aggressively.

\subsection{Selective Invitation with Tree Aggregation}

\label{subsec:single_target}


While the range-based greedy algorithm is intuitive, the nodes added to $R$
at separate iterations are not selected in a coordinated fashion. Thus, it
is difficult for the range-based greedy algorithm to effectively maximize
the acceptance probability. To address this issue, we propose a dynamic
programming algorithm, call \textit{Selective Invitation with Tree
Aggregation (SITA)}, that finds the optimal solution for APM by exploring
the maximum influence in-arborescence tree rooted at $t$ (i.e., $%
MIIA(t,\theta)$) in a bottom-up fashion. SITA starts from the leaf nodes,
i.e., nodes without in-neighbors, to explore $MIIA(t,\theta )$ in a
topological order until $t$ is reached finally. In order to obtain the
optimal solution, SITA needs to explore various allocations of the $r_{R}$
invitations to different nodes close to $s$ or $t$ in $MIIA(t,\theta ) $.
However, it is not necessary for SITA to enumerate all possible invitation
allocations. Thanks to the tree structure of $MIIA(t,\theta )$, for each
node $v$, SITA systematically summarizes the best allocation for $v$, i.e.
which generates the highest acceptance probability for $v$, corresponding to
the subtree rooted at $v$. 
The summaries will be exploited later by $v$'s parent node, i.e., the only
out-neighbor of $v$, to identify the allocation generating the highest
probability. The above procedure is repeated iteratively until $t$ is
processed, and the allocation of $r_{R}$ invitations to the subtree rooted
at $t$ is the solution returned by SITA.

More specifically, let $f_{v,r}$ denote the maximum acceptance probability
for $v$ to accept the invitation from $s$ while $r$ invitations have been
sent to the subtree rooted at $v$ in $MIIA(t,\theta )$. By first sorting all
nodes in topological order to $t$, we process $f_{v,r}$ of a node $v$ after
all $f_{u,r}$ of its in-neighbors $u$ have been processed. Apparently, $%
f_{v,0}=0$ for every node $v$ that is not a friend of $s$ because no
invitation will be sent to the subtree rooted at $v$. In contrast, for every
leaf node $v$, which is a friend of $s$ (or $s$ itself), $f_{v,r}=1$ for $%
r=0 $. 
For all other nodes $v$ in $MIIA(t,\theta )$, SITA derives $f_{v,r}$
according to each in-neighbor $f_{u_{i},r_{i}}$ as follows,
\vspace{-5pt}
\begin{equation}
f_{v,r}=\max_{\sum {r_{i}}=r-1}\{1-\prod_{u_{i}\in N^{in}(v)}{%
[1-f_{u_{i},r_{i}}\cdot w_{u_{i},v}]}\},  \label{eq:f}
\end{equation}%
where $N^{in}(v)$ denotes the set of in-neighbors of $v$ with \\
$\left\vert N^{in}(v)\right\vert =d_{v}$, $u_{i}$ is an in-neighbor of $v$,
and $r_{i}$ is the number of invitations sent from $s$ to the subtree rooted
at $u_{i}$. An invitation is sent to $v$, while the remaining $r-1$
invitations are distributed to the in-neighbors of $v$. SITA effectively
avoids examining all possible distribution of the $r-1$ invitations to the
nodes in the subtree. Instead, Eq. (\ref{eq:f}) examines only $%
f_{u_{i},r_{i}}$ of each in-neighbor $u_{i}$ of $v$ on every possible number
of invitations $r_{i}$. In other words, only the in-neighbors of $v$,
instead of all nodes in the subtree, participate in the computation of $%
f_{v,r}$ to efficiently reduce the computation involved. For each node $v$, $%
f_{v,r}$ is derived in ascending order of $r$ until reaching $r=min({r_{R}}%
,z_{v})$, where $z_{v}$ is the number of nodes that are not friends of $s$
in the subtree rooted at $v$. SITA stops after $f_{t,r_{R}}$ is obtained. In
the following, we show that SITA finds the optimal solution to APM.\footnote{%
Due to the space constraint, we do not show the pseudo-code of SITA\ here
but refer the readers to the next section where a more general SITINA is
presented.}


\begin{lemma}
Algorithm SITA answers the optimal solution to APM. \label{lamma:f}
\end{lemma}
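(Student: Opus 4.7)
My plan is to prove Lemma~\ref{lamma:f} by structural induction on $MIIA(t,\theta)$ processed in the bottom-up topological order used by SITA. Write $T_v$ for the subtree rooted at $v$. The inductive claim I would establish at every node $v$ is that, for all admissible $r$,
\[ f_{v,r} \;=\; \max_{R' \subseteq T_v \setminus S,\,|R'|=r}\; ap(v,S,R',MIIA(t,\theta)), \]
where the maximum is interpreted as $0$ when no valid $R'$ exists. Taking $v=t$ and $r=r_R$ gives exactly the APM objective, so the lemma follows.

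For the base cases, a leaf $v$ of $MIIA(t,\theta)$ lies in $S$ by construction, and Definition~\ref{def:acceptance_probability} then forces $ap(v,\cdot)=1$, matching the initialization $f_{v,r}=1$. A non-leaf $v\notin S$ with $r=0$ is excluded from every valid $R'$, so the second branch of Definition~\ref{def:acceptance_probability} gives $0$, matching $f_{v,0}=0$. For the inductive step, fix a non-leaf $v\notin S$ and assume the claim for every in-neighbor $u_i\in N^{in}(v)$. Because $MIIA(t,\theta)$ is an arborescence, the subtrees $T_{u_1},\dots,T_{u_{d_v}}$ are pairwise node-disjoint and together with $\{v\}$ exhaust $T_v$. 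Hence any $R'\subseteq T_v\setminus S$ with $|R'|=r$ and $v\in R'$ decomposes uniquely as $\{v\}\cup R'_1\cup\dots\cup R'_{d_v}$ with $R'_i\subseteq T_{u_i}\setminus S$ and $\sum_i|R'_i|=r-1$; the alternative $v\notin R'$ again gives acceptance probability $0$. Substituting the decomposed form into Definition~\ref{def:acceptance_probability} yields
\[ ap(v,S,R',MIIA(t,\theta)) \;=\; 1 - \prod_{u_i\in N^{in}(v)} \bigl[1 - ap(u_i,S,R'_i,MIIA(t,\theta)) \cdot w_{u_i,v}\bigr], \]
which matches the expression maximized in Eq.~(\ref{eq:f}) once the inductive hypothesis replaces each inner maximum by $f_{u_i,r_i}$.

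The delicate step I expect to spend the most care on is justifying that the joint maximum over the decomposed allocation $(R'_1,\dots,R'_{d_v})$ under the budget constraint $\sum_i|R'_i|=r-1$ factorizes into per-subtree maxima followed by a scalar maximization over $(r_1,\dots,r_{d_v})$, as SITA assumes. This rests on two observations: first, node-disjointness of the $T_{u_i}$'s means the choice of $R'_i$ inside $T_{u_i}$ imposes no constraint on any $R'_j$ with $j\ne i$; second, the aggregation map $(x_1,\dots,x_{d_v})\mapsto 1-\prod_i(1-x_i w_{u_i,v})$ is monotone non-decreasing in each $x_i$ because $w_{u_i,v}\in[0,1]$, so replacing each $ap(u_i,S,R'_i,\cdot)$ by its maximum at size $r_i$ can only increase the acceptance probability of $v$. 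Combined, these two observations give the required factorization, and the rest is a routine unwinding of Definition~\ref{def:acceptance_probability} together with the topological order argument that $f_{u_i,r_i}$ is always finalized before SITA computes $f_{v,r}$.
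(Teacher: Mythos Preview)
Your proof is correct and arguably more complete than the paper's. The paper argues by contradiction: it assumes $f_{t,r_R}$ is suboptimal and backtracks down the tree, claiming that at each step some in-neighbor value $f_{t_i,r_i}$ must also be suboptimal, until a leaf is reached and the base-case assignment is contradicted. Your direct structural induction runs the same logic in the forward direction. The two are dual in the usual sense that the paper's backtracking is the contrapositive of your inductive step.

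The substantive difference is that you make explicit what the paper leaves implicit. The paper's phrase ``according to the recurrence, there must exist at least one in-neighbor $t_1$ \ldots\ such that $f_{t_1,r_1}$ is not optimal'' silently relies on exactly the two facts you isolate: node-disjointness of the child subtrees (so the budget splits without interaction) and coordinatewise monotonicity of $(x_1,\dots,x_{d_v})\mapsto 1-\prod_i(1-x_iw_{u_i,v})$ (so replacing each child's value by its per-size optimum cannot decrease the parent's value). Without these, the contradiction step would not go through, because a suboptimal $f_{t,r_R}$ could in principle arise from optimal children combined via a non-monotone or coupled aggregation. Your version buys a cleaner audit trail for that step; the paper's version buys brevity.
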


\begin{proof}
We prove the lemma by contradiction. Assume that the solution from SITA,
i.e., $f_{t,r_{R}}$, is not optimal. According to the recurrence, there must
exist at least one in-neighbor $t_{1}\in N_{t}^{in}$ together with the
number of invitations $r_{1}$ such that $f_{t_{1},r_{1}}$ is not optimal.
Similarly, since $f_{t_{1},r_{1}}$ is not optimal, there exists at least one
in-neighbor $t_{2}\in N_{t_{1}}^{in}$ of $t_{1}$ with the number of
invitations $r_{2}$ such that $f_{t_{2},r_{2}}$ is non-optimal, $r_{2}<r_{1}$%
. Here in the proof, let $f_{t_{i},r_{i}}$ denote the non-optimal solution
found in $i$-th iteration of the above backtracking process, which will
continue and eventually end with a probability $f_{t_{i},r_{i}}$ such that
1) $t_{i}$ is a friend of $s$ but $f_{t_{i},0}\neq 1$ or $%
f_{t_{i},r_{i}} = 0$ for $r_{i}>0$, or 2) $t_{i}$ is not a friend of $s$
but $f_{t_{i},0}\neq 0$. The above two cases contradict the initial
assignment of SITA. The lemma follows.
\end{proof}

\noindent \textbf{Example. }
\begin{figure}[t]
\includegraphics[width=3.2 in]{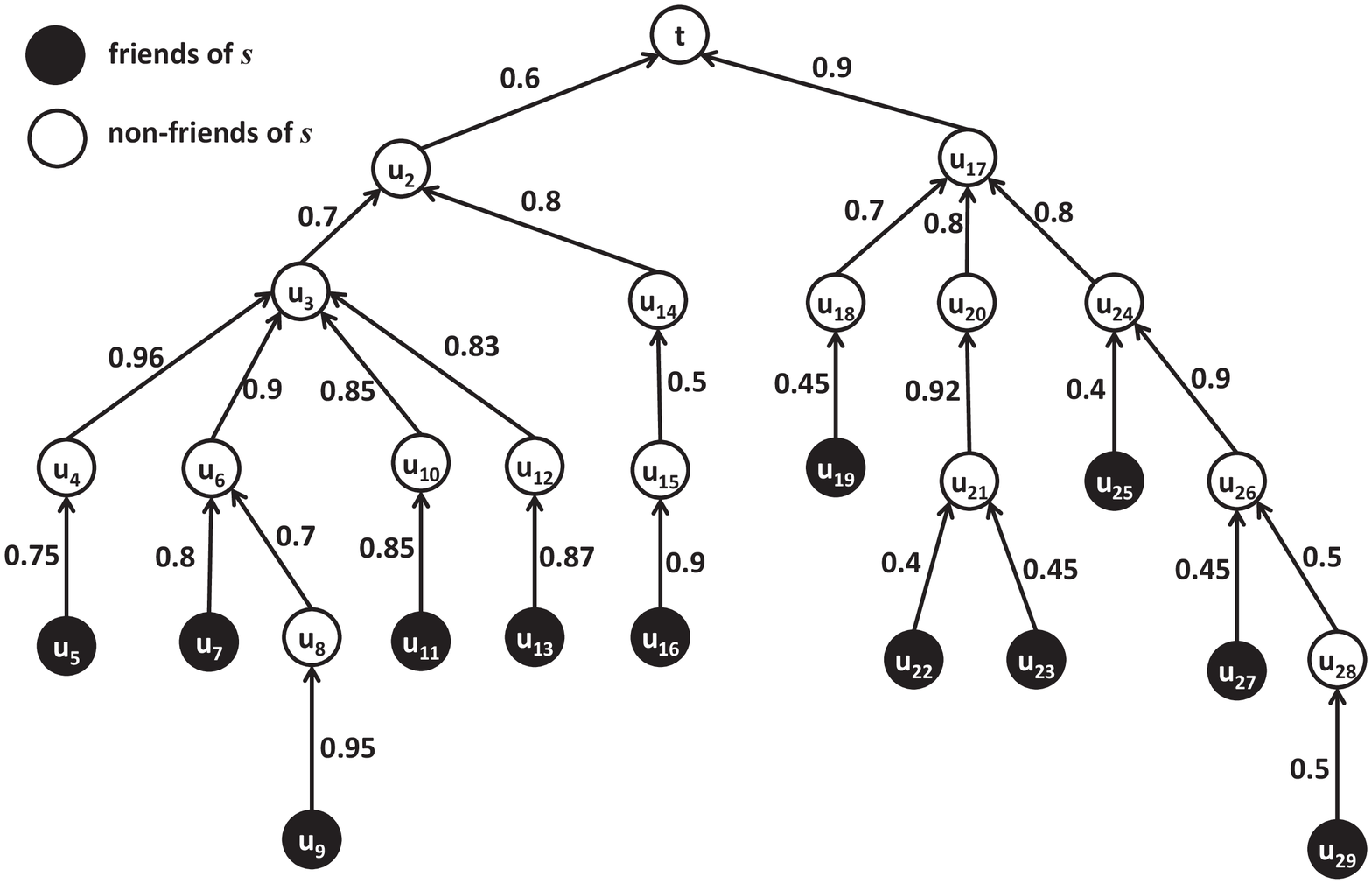} \vspace{-10pt}
\caption{The running example(not including $s$ and her edges)}
\label{fig:example}
\end{figure}
Figure~\ref{fig:example} illustrates an example of $MIIA(t,\theta )$ with $%
r_{R}=7$, where the nodes denote the users involved in deriving the maximal
acceptance probability for $t$ and the numbers labeled on edges denote the
influence probability between two nodes. Without loss of generality, $s$ and her homophily edges are not shown in Figure~\ref{fig:example}.
Note that the dark nodes at the leaf are $s$ and her existing friends and thus have the acceptance
probability as 1, while the white nodes are the recommendation candidates to
be returned by SITA along with their acceptance probabilities. SITA explores
$MIIA(t,\theta )$ from the dark leaf nodes in a topological order, i.e., the
$f_{v,r}$ of a node $v$ is derived after all $f_{u,r}$ of its in-neighbors $%
u $ are processed. Take $u_{4}$ as an example. $f_{u_{4},0}=0$ since no
invitation is sent and $f_{u_{4},1}=1-(1-f_{u_{5},0}\cdot 0.75)=0.75$.
Similarly, for $u_{8}$, $f_{u_{8},0}=0$ and $f_{u_{8},1}=0.95$. Consider $%
u_{6}$ which has in-neighbors $u_{7}$ and $u_{8}$, $f_{u_{6},0}=0$, $%
f_{u_{6},1}=0.8$, and $f_{u_{6},2}=1-(1-f_{u_{7},0}\cdot
0.8)(1-f_{u_{8},1}\cdot 0.7)=0.933$. Notice that for a node $v$, $f_{v,r}$
is derived for $r\in \lbrack 0,min(z_{v},{r_{R}})]$, e.g., for $u_{6}$, we
only derive $r\in \lbrack 0,2]$. Nevertheless, to find $f_{v,r}$, SITA needs
to try different allocations by distributing the number of invitations $%
r_{i} $ to each different neighbor $u_{i}$ and then combining the solutions $%
f_{u_{i},r_{i}}$ to acquire $f_{v,r}$. For example, to derive $f_{u_{17},5}$%
, it is necessary to distribute 4 invitation to its in-neighbors, including $%
u_{18}$, $u_{20}$ and $u_{24}$. The possible allocations for $%
(r_{18},r_{20},r_{24})$ include $(0,1,3)$, $(0,2,2)$, $(1,0,3)$, $(1,1,2)$
and $(1,2,1)$\footnote{%
Some allocations are eliminated since $r_{i}\notin \lbrack 0,min({r_{R}}%
,z_{u_{i}})]$.}, which will obtain acceptance probability 0.5738, 0.7539,
0.7081, 0.6674 and 0.7639 respectively.
Eventually, we obtain $f_{u_{17},5}=0.7639$. Notice that the number of
possible allocations grows exponentially. After all the nodes are processed,
we obtain $f_{t,7}=0.7483$. On the other hand, the greedy algorithm RG
selects a user $v\notin R$ with the highest acceptance probability and at
most $r_{R}-|R|-1$ hops away from $t$. Accordingly, it selects $u_{8}$, $%
u_{6}$, $u_{15}$, $u_{12}$, and $u_{3}$ sequentially. In the 6th step, the node
with the highest probability is $u_{10}$. However, $u_{10}$ is 3-hops away
with $3>r_{R}-|R|-1=2$ and thus not selected. Instead, it selects the node
with the next highest acceptance probability, i.e., $u_{2}$. In the last
step, only the root $t$ can be selected, so RG obtains a solution with the
acceptance probability as 0.4013. As shown, SITA outperforms RG.

%

\subsection{Selective Invitation with Tree and In-Node Aggregation}

\label{sec:SITINA}

Unfortunately, SITA is not a polynomial-time algorithm because in Eq. (\ref%
{eq:f}), $O(r^{d_{v}})$ allocations are examined to distribute $r_{i}-1$
invitations to the subtrees of the $d_{v}$ in-neighbors corresponding to
each node $v$. To remedy this scalability issue, we propose \textit{%
Selective Invitation with Tree and In-Node Aggregation (SITINA)} to answer
APM in polynomial time. SITINA\ effectively avoids processing of $%
O(r^{d_{v}})$ allocations by iteratively finding the best allocation for the
first $k$ in-neighbors, which in turn is then exploited to identify the best
allocation for the first $k+1$ in-neighbors. The process iterates from $k=1$
till $k=d_{v}$. Consequently, the possible allocations for distributing $%
r_{i}-1$ invitations to all in-neighbors are returned by Eq. (\ref{eq:f}) in
$O(d_{v}r_{R})$ time, where $d_{v}$ is the in-degree of $v$ in $%
MIIA(t,\theta )$.

To efficiently derive $f_{v,r}$ in Eq. (\ref{eq:f}), we number the
in-neighbors of $v$ as $u_{1}$, $u_{2}$... to $u_{d_{v}}$, where $d_{v}$ is
the in-degree of $v$. Let $m_{v,k,x}$ denote the maximum acceptance
probability by sending $x$ invitations to the subtrees of the first $k$
neighbors of $v$, i.e., $u_{1}$ to $u_{k}$. Initially, $%
m_{v,1,x}=f_{u_{1},x} $, $x\in \lbrack 0,r_{R}]$. SITINA derives $m_{v,k,x}$
according to the best result of the first $k-1$ in-neighbors,
\begin{equation}
m_{v,k,x}=\max_{x^{\prime }\in \lbrack
0,min(z_{u_{k}},x)]}\{1-[1-m_{v,k-1,x-x^{\prime }}][1-f_{u_{k},x^{\prime
}}w_{u_{k},v}]\},  \label{eq:m}
\end{equation}
where $f_{u_{k},x^{\prime }}w_{u_{k},v}$ is the acceptance probability for
allocating $x^{\prime }$ invitations to the $k$-th in-neighbor $u_{k}$, and $%
m_{v,k-1,x-x^{\prime }}$ is the best solution for allocating $x-x^{\prime }$
invitations to the first $k-1$ in-neighbors. By carefully examining
different $x^{\prime }$, we can obtain the best solution $m_{v,k,x}$ for a
given $k$.

SITINA starts from $k=1$ to $k=d_{v}$. For each $k$, SITINA begins with $x=0$
until $k=\min (\sum_{i\in \lbrack 1,k]}{z_{u_{i}},r_{R}}-1)$, where $%
\sum_{i\in \lbrack 1,k]}{z_{u_{i}}}$ is the total number of nodes that are
not friends of $s$ in the subtrees of the first $k$ in-neighbors. SITINA
stops after finding every $m_{v,d_{v},x}$, $x\in \lbrack 0,min(z_{v},{r_{R}}%
-1)]$. The pseudocode is presented in Algorithm~\ref{alg:base}, and the
following lemma indicates that the optimal solution of APM is $m_{t,d_{t},{%
r_{R}}-1}$.


\begin{lemma}
For any $v$ and $r$, $f_{v,r}=m_{v,d_v,r-1}$. \label{lemma:feqm}
\end{lemma}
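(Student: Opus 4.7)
The plan is to proceed by induction on $k$ to show the stronger statement that for every $v$, every $k \in [1, d_v]$, and every feasible $x$,
\[
m_{v,k,x} \;=\; \max_{\substack{r_1,\ldots,r_k \ge 0 \\ \sum_i r_i = x}} \Bigl\{1 - \prod_{i=1}^{k}\bigl(1 - f_{u_i, r_i}\, w_{u_i, v}\bigr)\Bigr\},
\]
where $u_1, \ldots, u_{d_v}$ is the enumeration of in-neighbors of $v$ used by SITINA. Once this is established, taking $k = d_v$ and $x = r-1$ makes the right-hand side exactly the expression inside the outer $\max$ of Eq.~(\ref{eq:f}) (after absorbing the leading factor corresponding to the invitation sent to $v$ itself), which gives $f_{v,r} = m_{v, d_v, r-1}$.

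For the base case $k=1$, I would read off directly from the initialization: $m_{v,1,x}$ corresponds to sending all $x$ invitations down the subtree of $u_1$ and multiplying by $w_{u_1,v}$, so $m_{v,1,x} = 1 - (1 - f_{u_1, x}\, w_{u_1, v})$, matching the formula. For the inductive step, assume the formula holds for $k-1$ and apply the recurrence (\ref{eq:m}):
\[
m_{v,k,x} = \max_{x'}\Bigl\{1 - \bigl[1 - m_{v,k-1, x-x'}\bigr]\bigl[1 - f_{u_k, x'}\, w_{u_k, v}\bigr]\Bigr\}.
\]
Substituting the induction hypothesis and using that $\max(1 - \alpha\beta) = 1 - \min(\alpha\beta)$ with $\beta = 1 - f_{u_k, x'}\, w_{u_k, v} \ge 0$ independent of $(r_1, \ldots, r_{k-1})$, the inner minimum and the outer minimum over $x'$ fuse into a single minimum over all $(r_1, \ldots, r_k)$ with $\sum_i r_i = x$. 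Rewriting one more time as $1$ minus this product then gives exactly the claimed formula for $k$.

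The main obstacle, though essentially notational, is justifying the fusion of the two nested optima into a single joint optimum. This rests on the separability observation that $\prod_{i=1}^{k}(1 - f_{u_i, r_i} w_{u_i, v})$ factors cleanly into the first $k-1$ terms and the $k$-th term, so minimizing the product over $(r_1, \ldots, r_k)$ with $\sum r_i = x$ can be done by first fixing $r_k = x'$, minimizing the head over $\sum_{i<k} r_i = x - x'$ using the inductive hypothesis, and then minimizing over $x'$. Handling the domain of $x'$ correctly (namely $x' \in [0, \min(z_{u_k}, x)]$, as SITINA uses) prevents wasting invitations on nodes that are already friends and ensures the maximum is attained at an allocation that SITINA actually considers. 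Finally, I would note that the $k = d_v$ case of the invariant, combined with the observation that $f_{v,r}$ in Eq.~(\ref{eq:f}) uses exactly $r-1$ invitations for the in-neighbors' subtrees (the remaining one going to $v$ itself), yields the lemma.
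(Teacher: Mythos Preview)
Your argument is correct. You prove, by forward induction on $k$, the explicit invariant
\[
m_{v,k,x} \;=\; \max_{\substack{r_1,\ldots,r_k \ge 0 \\ \sum_i r_i = x}} \Bigl\{1 - \prod_{i=1}^{k}\bigl(1 - f_{u_i, r_i}\, w_{u_i, v}\bigr)\Bigr\},
\]
and the separability step (factoring off the $k$-th term and fusing the two nested optimizations into one over $(r_1,\ldots,r_k)$) is exactly what is needed; the nonnegativity of each factor $1 - f_{u_i,r_i}w_{u_i,v}$ makes the min-of-products argument go through. Specializing to $k=d_v$, $x=r-1$ then recovers Eq.~(\ref{eq:f}).

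The paper takes a different route: it argues by contradiction, assuming $m_{v,d_v,r-1}$ is not optimal and then chasing the recurrence~(\ref{eq:m}) backwards through $k=d_v, d_v-1,\ldots,1$ to produce a non-optimal $m_{v,1,r_1}$, contradicting the initialization. That is essentially the contrapositive of your induction, compressed and without stating the intermediate invariant. Your version is more explicit about \emph{what} $m_{v,k,x}$ actually equals at every stage, which both clarifies why the in-node aggregation works and makes the domain handling for $x'$ transparent; the paper's proof is shorter but leaves the reader to reconstruct that invariant. Either approach suffices, but yours is the cleaner of the two.
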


\begin{proof}
We prove the lemma by contradiction. Assume that $m_{v,d_{v},r-1}$ is not
optimal. According to the recurrence, there exists at least one $r_{d_{v}}$
such that $m_{v,d_{v}-1,(r_{R}-1-r_{d_{v}})}$ is not optimal. Similarly,
since $m_{v,d_{v}-1,(r-1-r_{d_{v}})}$ is not optimal, there exists at least
one $r_{d_{v}-1}$ such that \\ $m_{v,d_{v}-2,(r-1-r_{d_{v}}-r_{d_{v}-1})}$ is not optimal. Therefore, let \\
$m_{v,d_{v}-i,(r-1-\sigma _{i})}$, where $\sigma _{i}=\sum_{j\in \lbrack
0,i-1]}{r_{d_{v}-j}}$, denote the non-optimal solution obtained in the $i$%
-th iteration. The backtracking process continues and eventually ends with $%
i=d_{v}-1$, where $m_{v,1,r_{1}}\neq f_{u_{1},x}$. It contradicts the
initial assignment of $m_{v,1,r_{1}}$, and the lemma follows.
\end{proof}

The following theorem proves that the algorithm answers the optimal solution
to APM in $O(n_{V}{r_{R}}^{2})$ time, where $n_{V}$ is the number of nodes
in a social network, and $r_{R}$ is the number of invitations in APM. Note that
any algorithm for APM is $\Omega (n_{V})$ time because reading $%
MIIA(t,\theta )$ as the input graph requires $\Omega (n_{V})$ time.
Therefore, SITINA is very efficient, especially in a large social network
with $n_{V}$ significantly larger than $d_{max}$ and $r_{R}$.


\begin{theorem}
SITINA Algorithm answers the optimal solution to APM in $O(n_{V}{r_{R}}^{2})$
time.
\end{theorem}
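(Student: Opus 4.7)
The plan is to decompose the theorem into a correctness claim and a running-time claim, and to dispatch each separately.

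For correctness, I would simply chain Lemmas~\ref{lamma:f} and~\ref{lemma:feqm}. Lemma~\ref{lamma:f} already guarantees that $f_{t,r_R}$ is the optimal acceptance probability for APM, and Lemma~\ref{lemma:feqm} establishes the identity $f_{v,r}=m_{v,d_v,r-1}$ for every node $v$ and every budget $r$. Specializing to $v=t$ and $r=r_R$, the value $m_{t,d_t,r_R-1}$ returned by SITINA equals $f_{t,r_R}$ and is therefore optimal. This step is a one-liner once the two lemmas are invoked.

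For the running time, I would bound the work charged to a generic node $v \in MIIA(t,\theta)$ and then sum over the tree. At node $v$, SITINA evaluates the recurrence in Eq.~(\ref{eq:m}) for every pair $(k,x)$ with $k\in[1,d_v]$ and $x$ in a range of size $O(r_R)$; each evaluation scans at most $O(r_R)$ candidate splits $x'$. Thus the work at $v$ is $O(d_v \cdot r_R^2)$. Summing over all $v$ gives $O\bigl((\sum_v d_v)\, r_R^2\bigr)$, and since $MIIA(t,\theta)$ is an in-arborescence on at most $n_V$ nodes we have $\sum_v d_v \le n_V - 1 = O(n_V)$. The total time is therefore $O(n_V r_R^2)$, matching the claim.

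The most delicate point, and the one I would write out most carefully, is why the inner loop is $O(r_R)$ \emph{uniformly in} $k$ rather than compounding over in-neighbors as in SITA. This is precisely the advantage of Eq.~(\ref{eq:m}): folding one in-neighbor at a time converts the product of per-child budget ranges (which forced the $O(r^{d_v})$ blow-up in Eq.~(\ref{eq:f})) into a single sum over $x'$. I do not anticipate any further obstacle; once this per-node bookkeeping is stated precisely, combining it with the edge-counting identity $\sum_v d_v = O(n_V)$ for a tree immediately yields the polynomial bound.
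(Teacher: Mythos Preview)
Your proposal is correct and follows essentially the same approach as the paper: correctness is obtained by chaining Lemma~\ref{lamma:f} and Lemma~\ref{lemma:feqm}, and the running-time bound is obtained by charging $O(d_v r_R^2)$ work to each node $v$ and then using $\sum_v d_v = |E| = n_V-1$ in the arborescence. Your additional remark explaining why Eq.~(\ref{eq:m}) avoids the $O(r^{d_v})$ blow-up of Eq.~(\ref{eq:f}) is a nice clarification but not part of the paper's proof itself.
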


\begin{proof}
According to Lemma~\ref{lamma:f} and Lemma~\ref{lemma:feqm}, SITINA obtains
the optimal solution of APM. Recall that $n_{V}$ is the number of nodes in
the social network, and $d_{v}$ is the in-degree of a node $v$ in $MIIA(t,\theta )$. The algorithm contains $O(n_{V})$ iterations. Each
iteration examines a node $v$ to find $m_{v,d_{v},x}$ for every $x\in
[0,\min (z_{v}-1,r_{R}-1)]$, where $r_{R}$ is number of invitations sent by $s$ in APM.
There are $O(d_{v}r_{R})$ cases to be considered to explore all $%
m_{v,d_{v},x}$ for $v$ in Eq. (\ref{eq:m}), and each case requires $O(r_{R})$
time. Therefore, finding $m_{v,d_{v},x}$ for a node $v$ needs $O(d_{v}{r_{R}}^{2})$ time, and for all nodes in $MIIA(t,\theta)$ is $O(\sum_{v}{d_{v}{r_{R}}^{2}})$, where $\sum_{v}{d_{v}}=|E|$. As $MIIA(t,\theta)$ is a tree (i.e. $|E|=n_V-1$), the overall time complexity is $O(n_{V}{r_{R}}^{2})$.
The theorem follows.
\end{proof}

\noindent \textbf{Example. }
\begin{table}[tbp]
\begin{tabular}{|c|c|c|c|c|c|c|}
\hline
$k$ & $x=1$ & $x=2$ & $x=3$ & $x=4$ & $x=5$ & $x=6$   \\ \hline
$1$ & 0.315 & * & * & * & * & *   \\
$2$ & 0.315 & 0.4931 & 0.6528 & * & * & *   \\
$3$ & 0.32 & 0.5342 & 0.6674 & 0.7639 & 0.8314 & 0.8520   \\ \hline
\end{tabular}
\label{table:m_example} 
\caption{All $m_{u_{7},k,x}$}
\end{table}
In the following, we illustrate how SITINA derives $f_{u_{17},r}$, $r\in
[0,z_{u_{17}}]$. At the beginning, the in-neighbors of $u_{17}$ are ordered
as $u_{18}$, $u_{20}$ and $u_{24}$.\footnote{%
To avoid confusing, we keep their ID in this example without renaming them
as $u_{u_{17}}^{1}$, $u_{u_{17}}^{2}$, and $u_{u_{17}}^{3}$.} Then, we find
all $m_{u_{17},1,x}=f_{u_{18},x-1}w_{u_{18},u_{17}}$, $x\in \lbrack 0,min({%
z_{u_{18}}},r_{R}-1)]$ first, representing the maximum acceptance
probability $u_{17}$ obtained by only sending $x$ invitations to the subtree
rooted at the first in-neighbor, i.e., $u_{18}$. Then we derive $%
m_{u_{17},2,x}$ for $x\in {[0,min(z_{u_{18}}+z_{u_{20}},r_{R}-1)]}$ to
acquire the maximum acceptance probability of $u_{17}$ by sending
invitations to subtrees rooted at $u_{18}$ and $u_{20}$. Notice that
different $x^{\prime }$, representing the invitations distributed to the $k$%
-th subtree, needs to be examined in order to find the optimal solution. For
instance, while deriving $m_{u_{17},3,4}$, we compare $%
1-(1-m_{u_{17},2,1})(1-f_{u_{24},3}\times w_{u_{24},u_{17}})=0.7081$, $%
1-(1-m_{u_{17},2,2})(1-f_{u_{24},2})\times w_{u_{24},u_{17}})=0.7539$ and $%
1-(1-m_{u_{17},2,3})(1-f_{u_{24},1})\times w_{u_{24},u_{17}})=0.7417$ and
obtain $m_{u_{17},3,4}=0.7539$. After deriving all $%
f_{u_{17},x+1}=m_{u_{17},d_{v},x}$ for $x\in \lbrack 0,6]$ ($%
min(r_{R}-1,z_{u_{17}-1})=6$), the computation of $u_{17}$ finishes. Table 1 lists the detailed results, where * denotes the
instances with $r$ exceeding the number of people who are not the friends of $%
s$ in the first $k$ subtrees.\footnote{%
Note that $m_{u_7,k,x}=0$ when $x=0$.}


\begin{algorithm}[t]
\caption{Selective Invitation with Tree and In-Node Aggregation (SITINA)}
\label{alg:base}
\begin{algorithmic}[1]
\REQUIRE The query issuer $s$; the targeted user $t$; the influence tree $MIIA(t,\theta)$ rooted at $t$; the number of requests $r_R$ that $s$ can send.
\ENSURE A set $R$ of selected users that $s$ sends requests to, such that the acceptance probability is maximized.

\STATE {Obtain a topological order $\sigma$ which orders a node without in-neighbor first.}
\FOR {$v \in \sigma$}
    \STATE{//obtain all $f_{v,r}$, $r \in [0,\min(r_R, n_v)$]}
    \STATE {Order in-neighbors of $v$ as $u^1_v$, $u^2_v$,... $u^{d_v}_v$}
    \STATE {$m_{v, 0, r} \leftarrow 0$ for $\forall r \in [0, min(r_R-1, n_r-1)]$}
    \FOR{ $k = 1$ \textbf{to} $d_v$ }
        \FOR{ $r = 1$ \textbf{to} $min(r_R-1, n_v-1)$}
            \STATE {$x = r-1$}
            \STATE {$m_{v,k,x} = 0$}
            \FOR {$x' = 0$ to $r$}
                \IF{$m_{v,k,x} < 1-[1-m_{v,k-1,x-x'}][1-f_{u^k_v,x'}w_{u^k_v,v}]$}
                    \STATE{$m_{v,k,x} \leftarrow 1-[1-m_{v,k-1,x-x'}][1-f_{u^k_v,x'}w_{u^k_v,v}]$}
                    \STATE{$\pi_{v,k,x} \leftarrow x'$}
                \ENDIF
            \ENDFOR
        \ENDFOR
    \ENDFOR
    \STATE{$f_{v,0}\leftarrow 0$}
    \STATE{$f_{v,x+1} \leftarrow m_{v,k,x}$, $\forall x \in [0, min(r_R-1, n_r-1))]$}
\ENDFOR
\STATE{Backtrack $\pi_{v,k,x}$ to obtain $R$}
\RETURN{$R$ with maximized $f_{t, r_R}$}
\end{algorithmic}
\end{algorithm}


\section{Performance Evaluation}

\label{sec:experiments}

We implement active friending in Facebook and conduct a user study and a comprehensive set of experiments to validate
our idea of active friending and to evaluate the performance of the proposed
algorithms. In the following, we first detail the methodology of our
evaluation and then present the results of our user study and experiments,
respectively.

\subsection{Methodology}

We adopt a user study and experiments, two complementary approaches, for the
performance evaluation. We aim to use the user study to investigate how the
recommendation-based active friending approach is faring with the approach
based on the users' own strategies (i.e., which they would follow under the
existing environment of social networking services). To perform the user
study, we implement an app. on Facebook.
Through the app., the user is able to decide whom to invite based on their
own strategies to approach the target. Meanwhile, according to the
recommendations generated from the Range-Based Greedy (RG) algorithm and the
Selective Invitation with Tree and In-Node Aggregation (SITINA) algorithm,
respectively, the user also sends alternative sets of invitations to proceed
the active friending activities for comparison.\footnote{%
To alleviate the burden of the participants, we send
invitations on their behalves to the recommended candidates.} Note that
Selective Invitation with Tree Aggregation (SITA) is not considered because it
makes exactly the same recommendations as SITINA. We recruited 169
volunteers to
participate in the user study. 
Each volunteer is given 25 targets with varied invitation budgets to work
on. The social distances between the volunteer and the targets are
pre-determined in order to collect results for comparison under controlled
parameter settings.

On the other hand, we conduct experiments by simulation to evaluate the
solution quality\ and efficiency of SITA, SITINA, and RG, implemented in an
HP DL580 server with four Intel Xeon E7-4870 2.4 GHz CPUs and 128 GB RAM.
Two large real datasets, \textit{FacebookData} and \textit{FlickrData} are
used in the experiments. \textit{FacebookData} contains 60,290 users and
1,545,686 friend links crawled from Facebook~\cite{Viswanath09WOSN}, and
\textit{FlickrData} contains 1,846,198 users and 22,613,981 friend links
crawled from Flickr~\cite{Mislove07IMC}. The initiator $s$ and target $t$
are selected uniformly at random.

An important issue faced in both of our user study and the experiments is
the social influence and homophily factors captured in the social network,
which are required for RG, SITA and SITINA to make recommendations.
Most of previous works adopt a fixed probability (e.g., 1/degree in \cite%
{Kempe03KDD,Chen10KDD,Chen11SDM,Chen10ICDM}) or randomly choose a
probability from a set a values (e.g., 0.001, 0.01, 0.1 in \cite%
{Chen10KDD,Chen11SDM}) due to the lack of real social influence
probabilities and homophily probabilities. To address this issue, in the
user study, we obtain the social influence probability on each edge by
mining the interaction history of volunteers in Facebook in accordance with
\cite{Goyal10WSDM,Goyal11VLDB}. We also derive the homophily probabilities
from $s$ to other nodes by mining the profile information in their Facebook
pages based on \cite{Chen09CHI}. The social network in the user study is
denoted as \textit{UserStudyData}. As for the social networks in \textit{%
FacebookData} and \textit{FlickrData} that are to be used for experiments,
we unfortunately do not have personal profiles and historical interactions
of the nodes. Thus, we could not generate the social influence probability
and homophily probability by mining real data. As a result, we choose to
assign the link weights of the social network based on: i) the distributions
of social influence and homophily probabilities obtained from our user study
(denoted as US), and ii) the Zipf distribution for its ability to capture
many phenomena studied in the physical and social sciences~\cite{Zipf1949AWP}%
. 

\subsection{User Study} \label{sec:US}

Through the user study, we have logged the responses of participants to
invitations and thus are able to calculate the acceptance probabilities
corresponding to invitations under various circumstances. Using the
collected data, we make a number of comparisons.

First, we would like to verify that the acceptance probability of an active friending plan derived based on MIIA tree (using the mined social influence and homophily probabilities as the link weights) are consistent with that of the plan being executed in the user study. Towards this goal, we first verify the accuracy of our invitation acceptance model (for single invitation) by comparing the derived acceptance probability and the actual acceptance probability obtained from real activities in the user study.
%
Figure~\ref{fig:user_study_p1}(a), where results obtained from the user study and our model are respectively labeled as {\em Actual} and {\em Derived}, plots the comparison in terms of the number of common friends in an invitation.
As shown, the acceptance probabilities of both User and
Model increase as the number of common friends in invitations increases. Most
importantly, the results are consistently close, showing our invitation acceptance model (and the social
influence and homophily weights used) are able to reasonably capture the decision making upon invitations in real life.


\begin{figure}[t]
\subfigure[Invitation Acceptance] {\includegraphics[ width=1.6 in, height = 0.9
in]{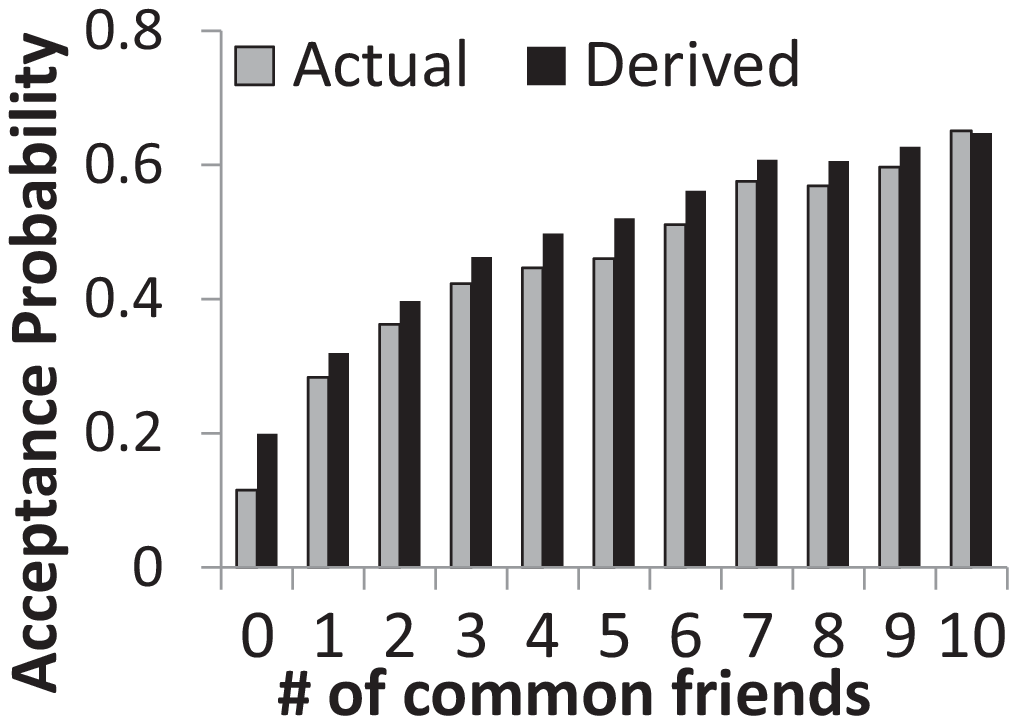}}
\subfigure[MIIA Tree]{\includegraphics[ width=1.6 in, height = 0.9
in]{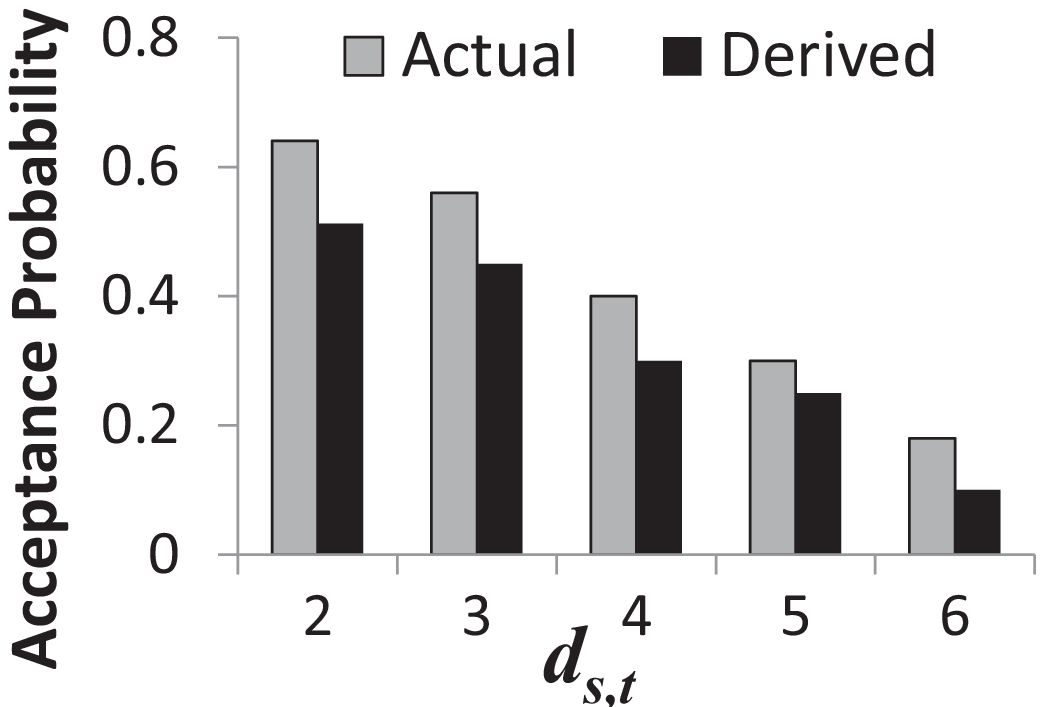}} 
\caption{Verify the derived acceptance probability}
\vspace{5pt}
\label{fig:user_study_p1}
\vspace{5pt}
\end{figure}

Notice that the above comparison focusses on the aspect of invitation acceptance only, without taking into account the social network topology, which we approximate with the MIIA tree.
To verify that using the MIIA tree is sufficiently effective for active friending planning, we further compare the acceptance probability derived using our proposed algorithms and the actual acceptance probability obtained through executing the plan in the user study. Figure~\ref{fig:user_study_p1}(b) shows that, under various distance between initiator and target, the acceptance probabilities derived using MIIA tree is reasonably close to the actual acceptance probabilities.

%


Next, we compare the effectiveness of strategies based on RG, SITINA and the participants' own heuristics. Figure~\ref{fig:user_study_p2}(a) plots the comparison by
varying the number of friending invitations, $r_{R}$.
RG and SITINA generally outperform user heuristics (labeled as User) under all settings. We can observe
that the performance of SITINA is generally very good and getting better as $r_R$ increases, while the performance of User and RG have a leap from $r_{R}=5$ to $10$ and remain close afterwards. This indicates the extra computation effort required for deriving recommendations due to the
increased invitation budget are worthwhile, outperforming the heuristic strategies
derived based on RG and human intuition.
Figure~\ref{fig:user_study_p2}(b) evaluates the acceptance probability of $t$
under varied settings of $d_{s,t}$. When $d_{s,t}$ is 2, it is more likely to have a lot of common friends (due to the nature of social networks) and thus getting better acceptance probabilities. When $d_{s,t}$
increases, it becomes more difficult for an initiator to make effective
decisions due to the less number of common friends and the lack of knowledge about the larger and more complex
social network topology behind.
As shown in Figure~\ref{fig:user_study_p2}, SITINA\ has
the best performance.




\begin{figure}[t]
\subfigure[Different $r_R$($d_{s,t}=3$)] {\includegraphics[ width=1.6 in, height = 0.9
in]{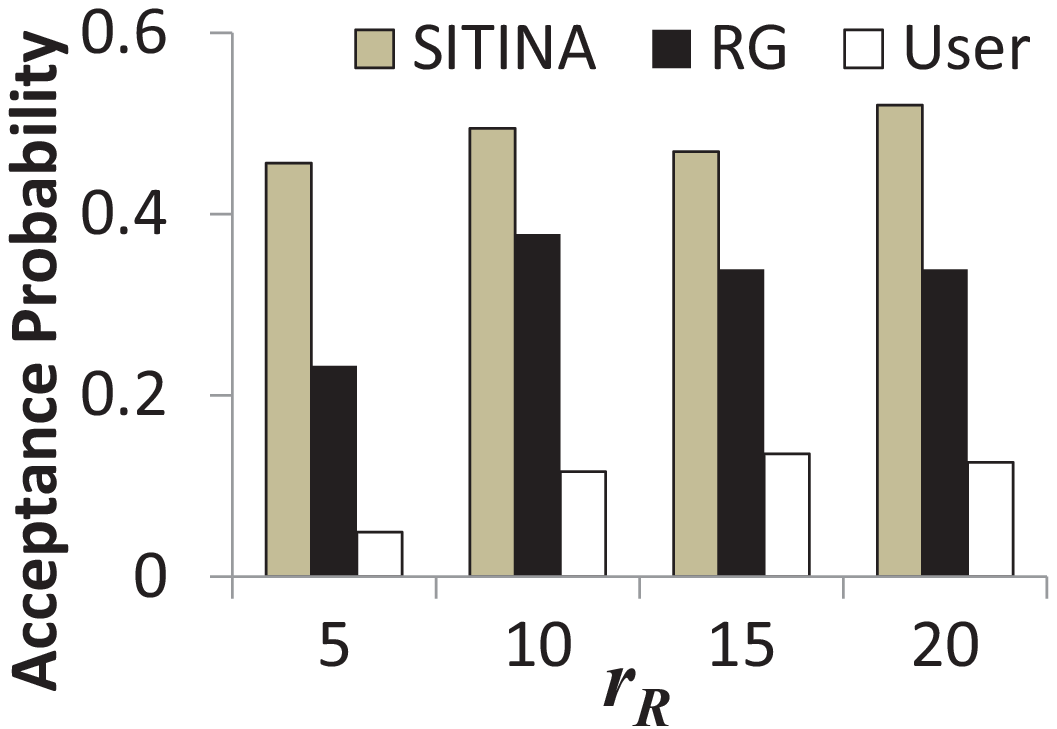}}
\subfigure[Different
$d_{s,t}$($r_R=20$)]{\includegraphics[ width=1.6 in, height = 0.9
in]{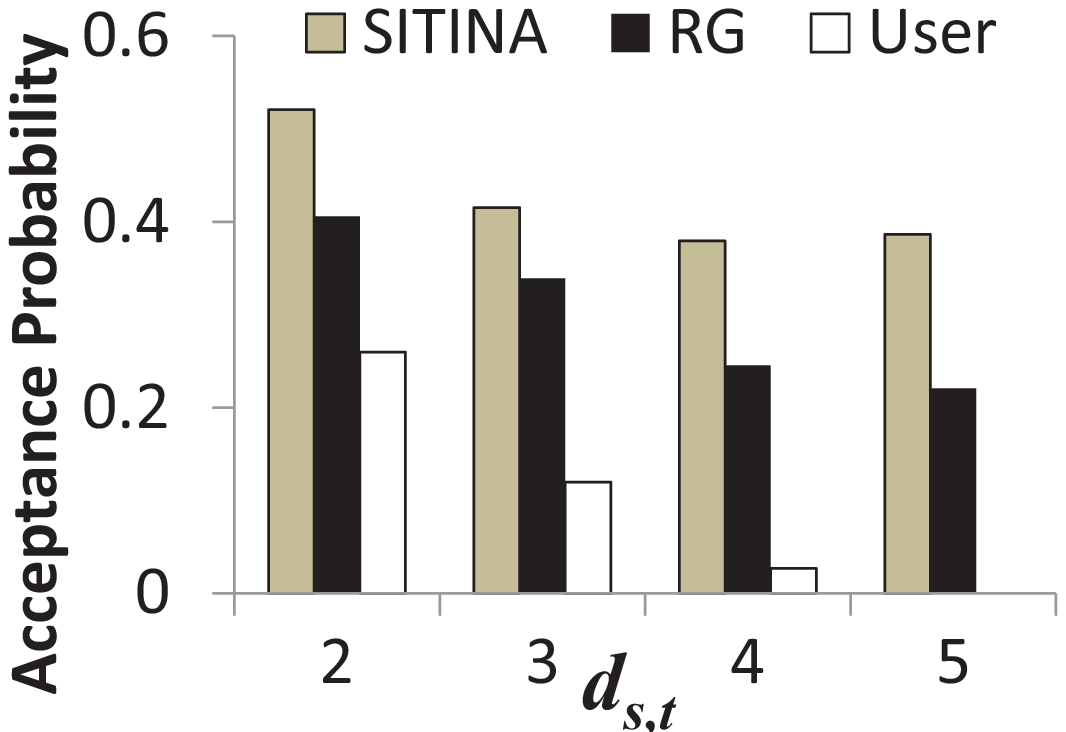}} 
\caption{Acceptance probability in user study}
\label{fig:user_study_p2}
\end{figure}


\subsection{Experimental Results}

While the user study verifies that SITINA is able to
achieve the best performance, the size of social network is small due to the
limited number of volunteers participating in the study. To further validate
our ideas in a large-scale social network and to evaluate the scalability of
SITINA, we conduct an experimental study by simulations.

\subsubsection{Scalability}

\begin{figure}[t]
\begin{minipage}[t]{0.46\linewidth}
\centering
\vspace{4pt}
\includegraphics[width=1.6in, height = 0.9 in]{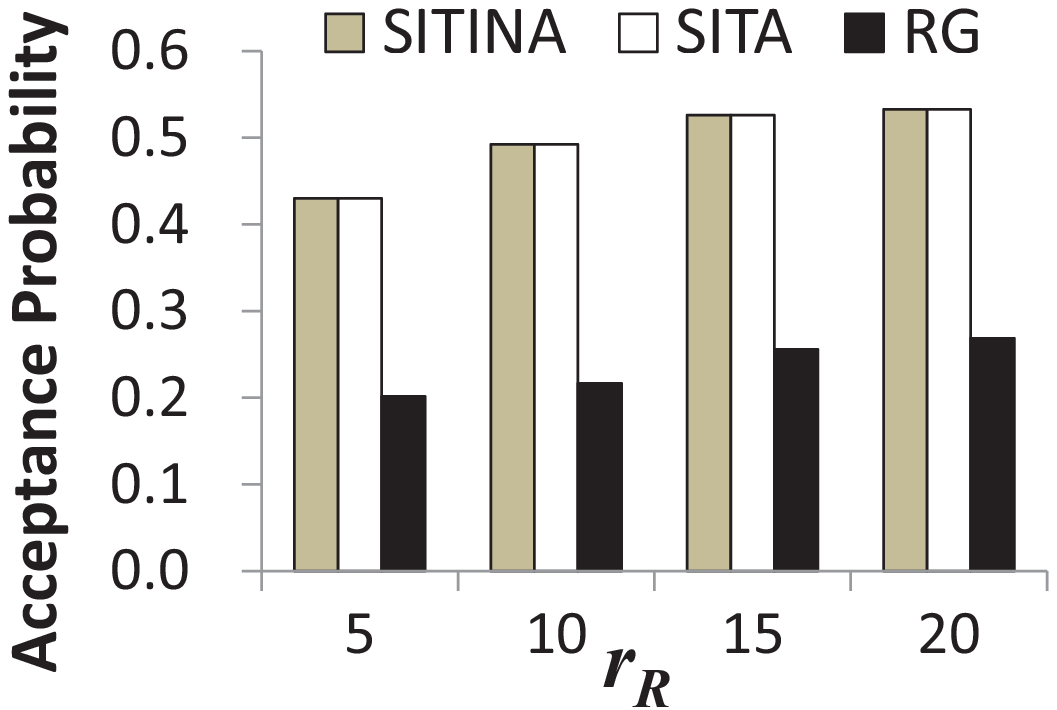}
\caption{Acceptance probability ($d_{s,t}=2$)}\label{fig:SITA}
\end{minipage}
\hspace{2pt}
\begin{minipage}[t]{0.46\linewidth}
\centering
\vspace{8pt}
\includegraphics[width=1.6in]{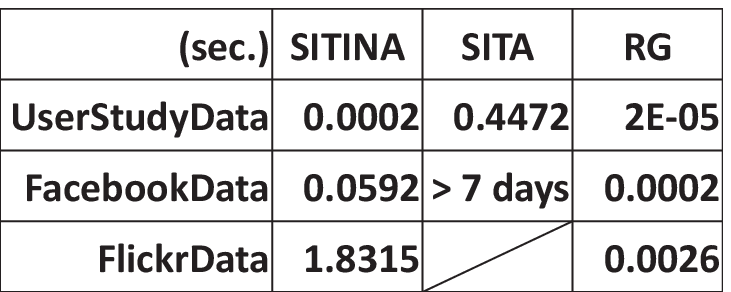}
\vspace{0.10 in}
\caption{Average Running time ($r_R=20$)} \label{fig:running_time}
\end{minipage}
\end{figure}

As proved earlier,
SITA can obtain the optimal
solution of APM. However, it is not scalable as it needs to examine all
combinations of invitation allocations. Here we use it as a baseline to
compare the efficacy and efficiency with SITINA over social networks of
different sizes. First, we compare the results by randomly sampling 50
(initiator, target) pairs using \textit{UserStudyData}.
As Figure \ref{fig:SITA} depicts, both SITA and SITINA significantly
outperform RG in terms of acceptance probability.
Next, we compare their running time, not only using \textit{UserStudyData}
but also the large-scale \textit{FacebookData} and \textit{FlickrData}.
As shown in Figure~\ref{fig:running_time}, the SITA algorithm takes more than 7 days without
returning the answer and thus not feasible for practical use..
For the rest of experiments, we only compare SITINA with RG.


\subsubsection{Sensitivity Tests}

In this section, we conduct a series of sensitivity tests to examine the
impact of different parameters, including the invitation budget ($r_{R}$),
the number of friends of $s$ ($N$), the distance between the initiator and target ($d_{s,t}$), and the skewness of social influence and homophily probabilities ($\alpha$). In experiments on the impacts of $r_R$, $N$, $d_{s,t}$, we have tested both the FacebookData (US)  and FlickrData (US).\footnote{US and ZF denote the link weights assigned based on models
from User Study and Zipfian Distribution.} As the observations on both datasets are quite similar, we only report both results for the first experiment and skip the FlickrData result for the rest due to space constraint. Finally, in the last experiment, we use FacebookData (ZF) to observe how $\alpha$ may potentially impact our algorithms.

%

\noindent \textbf{Impact of $r_{R}$.} By varying $r_{R}$ and setting the
default $d_{s,t}$ of sampled $(s,t)$ pairs as 4, we compare SITINA and RG in
terms of the acceptance probability and the number of iterations using
FacebookData (US) and FlickrData (US) (see Figure~\ref{fig:dif_r_fb} and
Figure~\ref{fig:dif_r_fr}, respectively).
As shown in Figure \ref{fig:dif_r_fb}(a) and \ref{fig:dif_r_fr}(a), SITINA
exhibits much better performance than RG, regardless of the $r_{R}$.
Meanwhile, Figure \ref{fig:dif_r_fb}(b) and \ref{fig:dif_r_fr}(b) manifest
that the longest path in the solution $R$ obtained by SITINA is shorter than
that in RG because RG tends to spend invitations on some local users with
higher acceptance probabilities.\footnote{%
RG is inclined to take more time to reach $t$ because
invitations are sequentially sent towards $t$. The latency of friending a
new intermediate node is different for each node.}

\begin{figure}[t]
\subfigure[Acceptance probability] {\includegraphics[height = 0.9 in,
width=1.6in]{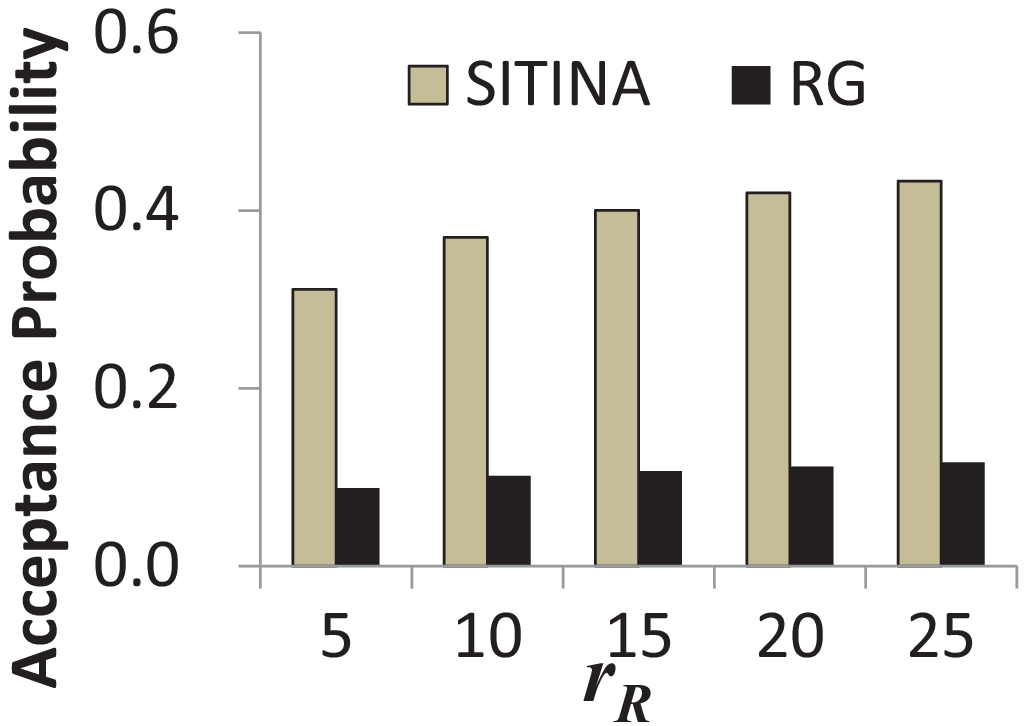}}
\subfigure[Number of edges]{\includegraphics[height = 0.9 in, width=1.6
in]{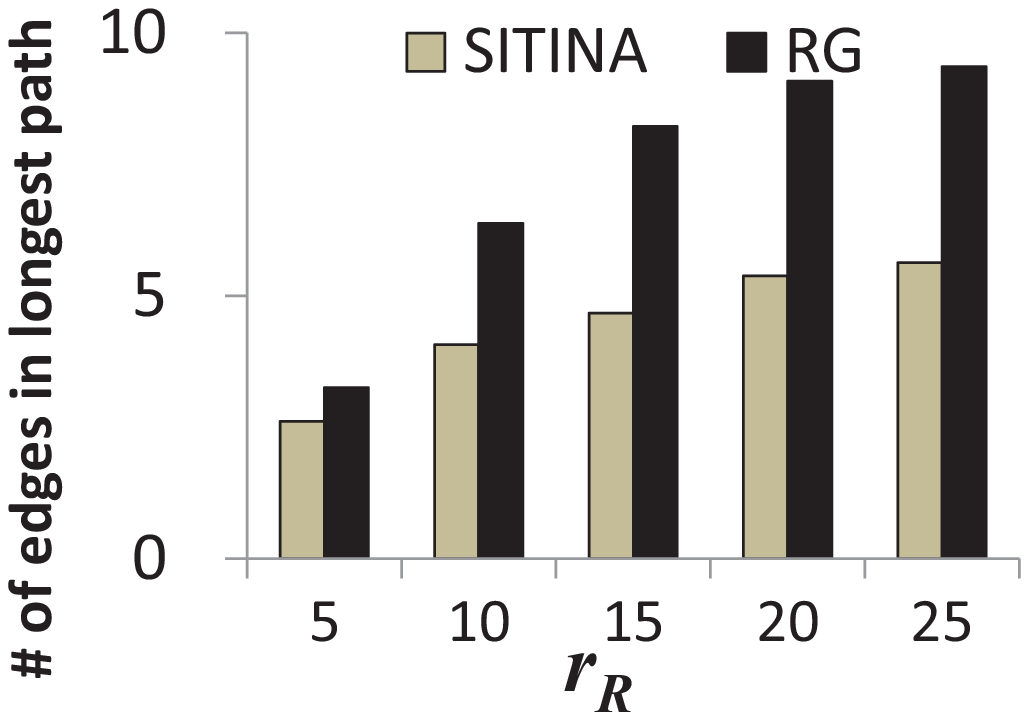}} 
\caption{Varying $r_R$ (\textit{FacebookData} (US))}
\label{fig:dif_r_fb}
\end{figure}
\begin{figure}[t]
\subfigure[Acceptance probability] {\includegraphics[height = 0.9 in,
width=1.6in]{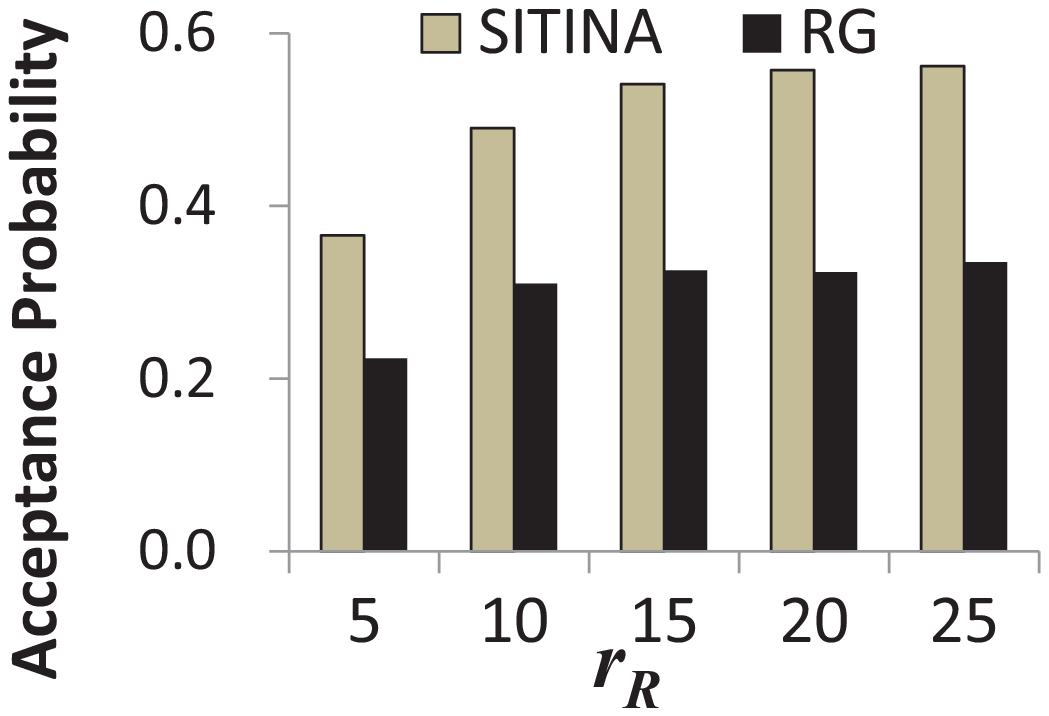}}
\subfigure[Number of edges]{\includegraphics[height = 0.9 in, width=1.6
in]{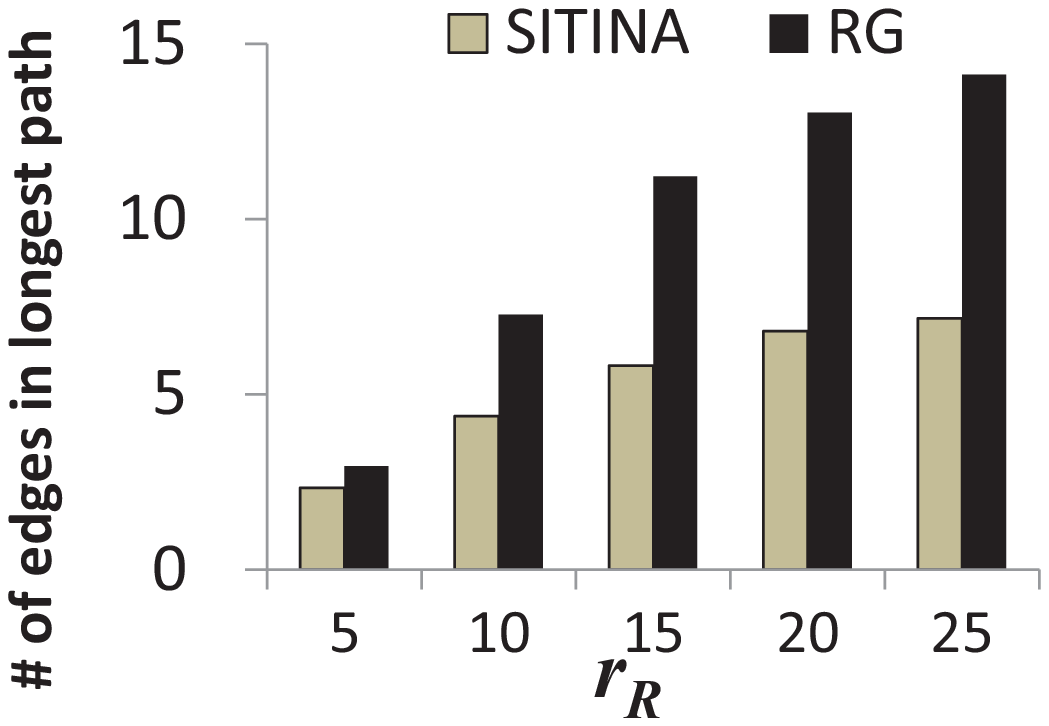}} 
\caption{Varying $r_R$ (\textit{FlickrData} (US))}
\label{fig:dif_r_fr}
\end{figure}

\begin{figure}[t]
\begin{minipage}[t]{0.46\linewidth}
\centering
\vspace{0pt}
\includegraphics[width=1.6in, height = 0.9 in]{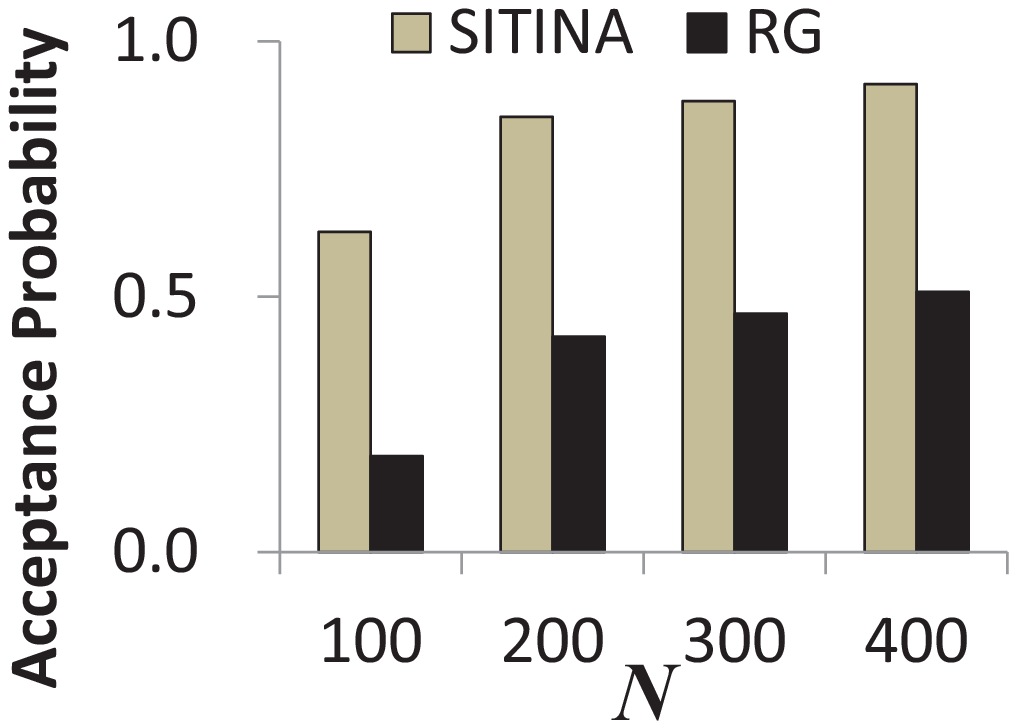}
\caption{Varying $N$ (\textit{FacebookData} (US))}\label{fig:diff_n}
\end{minipage}
\hspace{2pt}
\begin{minipage}[t]{0.46\linewidth}
\centering
\vspace{0pt}
\includegraphics[width=1.6in, height = 0.9 in]{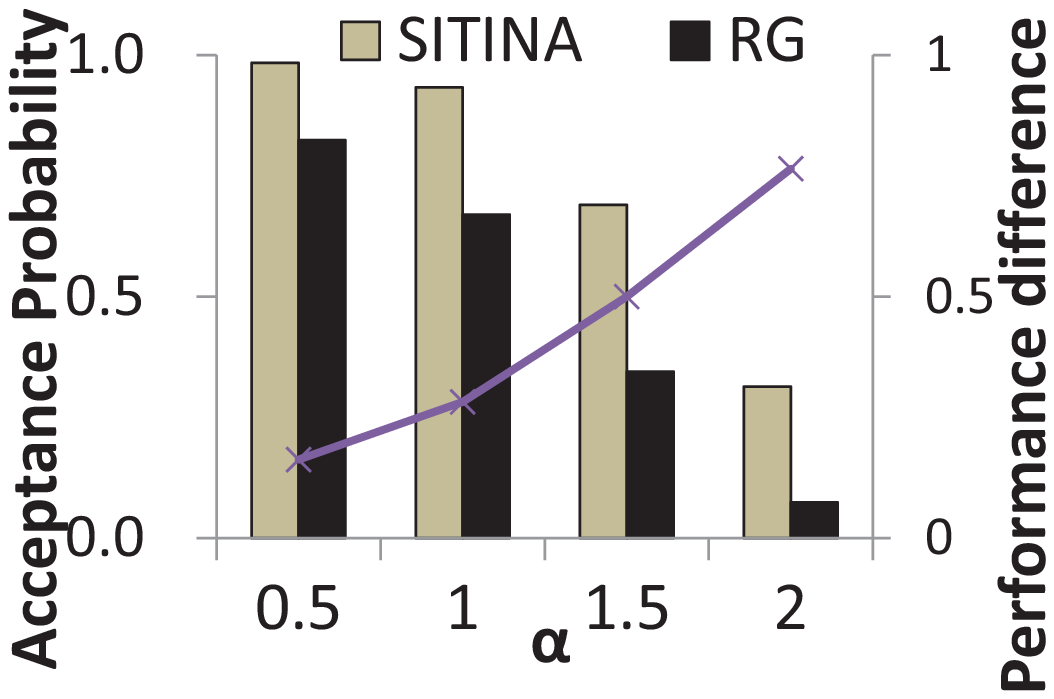}
\caption{Varying $\alpha$ (\textit{FacebookData} (ZF))}\label{fig:ZF}
\end{minipage}
\end{figure}

\noindent \textbf{Impact of $N$.} We are interested in finding whether the number of
friends of $s$ has an impact on the performance. Thus,
we choose four different groups of initiators $s$ (who have around 100, 200, 300, and 400 friends, respectively)
and sample 100 different targets $t$ to compare their acceptance probabilities. With
$r_{R}$ set as 25, Figure \ref{fig:diff_n} shows that as the number of
friends increases, the initiators have more choices to reach their targets.
As shown, SITINA can find the optimal solution with high acceptance probability, while
near-sighted RG tends to select the friends of friends with higher
acceptance probabilities and eventually results in small acceptance
probability to $t$.

\noindent \textbf{Impact of $d_{s,t}$.} We also conduct an experiment to understand the impact of $d_{s,t}$ on the performance. Not surprisingly, the finding is consistent with our user study (please refer to Section~\ref{sec:US} and Figure ~\ref{fig:user_study_p2}(b)). Thus, we do not plot the result here due to the space constraint.







\noindent \textbf{Impact of $\alpha$.} In the experiments above, social influences and homophily factors are modeled based on our user study, but the distributions in different social networks may vary. Thus, through the skewness parameter $\alpha$, we use Zipf distribution, to examine the impact of $\alpha$ on our algorithms. As shown in Figure~\ref{fig:ZF}, we can observe that as the distributions of social influence and homophily become more skewed (i.e., $\alpha$ increases), the acceptance probabilities of SITINA and RG drops, because it becomes more difficult for invitations to get accepted when there are less number of highly influential links while the number of less influential links increases.
It is also worth noting that,as the line in the figure indicates, the percentage of performance difference between SITINA and RG (i.e., acceptance prob. of SITINA divided by that of RG) increases, showing that SITINA is able to handle skewed distribution much better than RG.

\section{Conclusion and Future Work}

\label{sec:conclusion}

Observing the need of active friending in everyday life, this paper
formulates a new optimization problem, named Acceptance Probability
Maximization (APM), for making friending recommendations on-line social
networks. We propose Algorithm Selective Invitation with Tree and In-Node
Aggregation (SITINA), to find the optimal solution for APM and implement
SITINA in Facebook. User study and experimental results manifest that active
friending can effectively maximize the acceptance probability of the
friending target.

In our future work, we will first explore the impact of delay between
sending an invitation and acquiring the result in active friending. This is
important when the user would like to make friends with the target within
certain time frame. In addition, for multiple friending targets, it is not
efficient to configure recommendations separately for each target. An idea
is to give priority to the intermediate nodes that can approach many targets
simultaneously. We will study active friending of a group of targets in the
future work.

\bibliographystyle{abbrv}
\bibliography{reference}

\appendix

We display that APM is not submodular by the following counter example with four users.
The user $a$ is the existing friend of $s$, i.e., $S=\{a\}$, while $b$, $c$, and $t$ are non-friend users of $s$. The influence probability is labeled beside each edge. Consider adding a new user $c$ to two different set of selected users $R_S=\{t\}$ and $R_T=\{b,t\}$, where $R_S \subset R_T$. If the submodular property holds, $ap(t,S,R_S\cup\{c\},MIIA(t,\theta))) - ap(t,S,R_S,MIIA(t,\theta))) \geq  ap(t,S,R_T\cup\{c\},MIIA(t,\theta))) - ap(t,S,R_T,MIIA(t,\theta)))$ should hold.
The acceptance probability of selecting $R_S$, i.e., $ap(t,S,R_S,MIIA(t,\theta)))$, is 0 since there is no path from $a$ to $t$. Similarly, $ap(t,S,R_S\cup\{c\},MIIA(t,\theta)))=0$. The acceptance probability of selecting $R_T$ is $1-(1-0.9\times0.1)=0.09$, and adding $c$ into $R_T$ results in acceptance probability $ap(t,S,R_T\cup\{c\},MIIA(t,\theta))=1-(1-0.09)(1-0.9)=0.909$. However, $ap(t,S,R_S\cup\{c\},MIIA(t,\theta))) - ap(t,S,R_S,MIIA(t,\theta)))=0 < ap(t,S,R_T\cup\{c\},MIIA(t,\theta))) - ap(t,S,R_T,MIIA(t,\theta)))=0.909-0.09=0.819$. There is a counter example and the submodular property does not hold in APM.

\begin{center}
\resizebox{3 in}{0.66 in} {\includegraphics{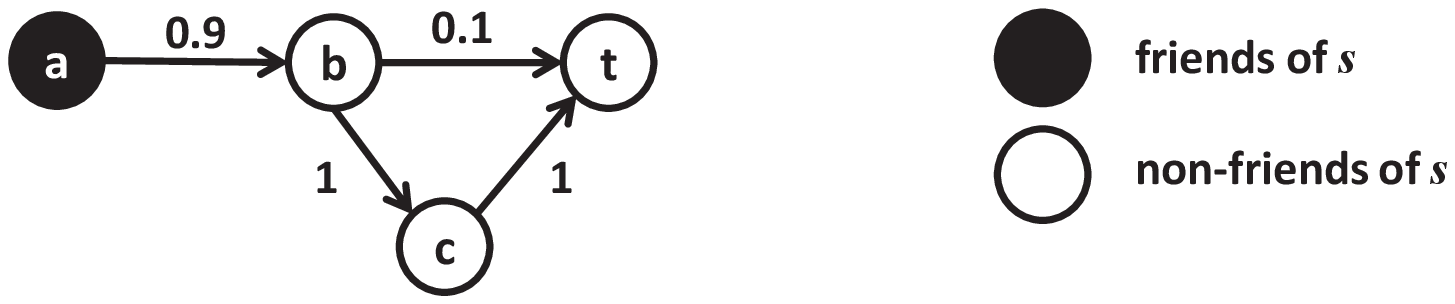}}
\end{center}

\end{document}